\documentclass[9pt]{IEEEtran}
\usepackage{amssymb}
\usepackage{amsmath}
\usepackage{amsthm}
\usepackage{enumerate}
\usepackage{varioref}
\usepackage{psfrag,balance}
\usepackage{graphics}
\usepackage{psfrag}
\usepackage{overpic}
\usepackage{extarrows}
\usepackage{cite}
\usepackage{color}
\usepackage{leftidx}

\newcommand{\diag}{\mathrm{diag}}
\newcommand{\D}{\frak{D}}

\theoremstyle{plain}
\newtheorem{lemma}{\bf Lemma}
\newtheorem{thm}{Theorem}

\theoremstyle{remark}
\newtheorem{remark}{Remark}

\hyphenation{op-tical net-works semi-conduc-tor}

\begin{document}

\title{On Convergence Rate of Leader-Following Consensus of Linear Multi-Agent Systems with Communication Noises\thanks{Part of this work has been presented at the IFAC World Congress, Capetown, South Africa, August 2014. L. Cheng, Y. Wang, Z.-G. Hou and M. Tan are with the State Key laboratory of Management and Control for Complex Systems, Institute of Automation, Chinese Academy of Sciences, Beijing 100190, China. W. Ren is with Department of Electrical and Computer Engineering, University of California, Riverside, CA 92521, USA.}\thanks{Please address all correspondences to Dr. Long Cheng at
		Email: chenglong@compsys.ia.ac.cn; Tel: 8610-82544741; Fax: 8610-82544794.}}
%
%
%

\author{Long~Cheng, Yunpeng~Wang, Wei~Ren, Zeng-Guang Hou, Min Tan}

\maketitle

\begin{abstract}
This note further studies the previously proposed consensus protocol for linear multi-agent systems with communication noises in \cite{Cheng14TAC,Wang15TAC}. Each agent is allowed to have its own time-varying gain to attenuate the effect of communication noises. Therefore, the common assumption in most references that all agents have the same noise-attenuation gain is not necessary. It has been proved that if all noise-attenuation gains are infinitesimal of the same order, then the mean square leader-following consensus can be reached. Furthermore, the convergence rate of the multi-agent system has been investigated. If the noise-attenuation gains belong to a class of functions which are bounded above and below by $t^{-\beta}$ $(\beta\in(0,1))$ asymptotically, then the states of all follower agents are convergent in mean square to the leader's state with the rate characterized by a function bounded above by $t^{-\beta}$ asymptotically.
\end{abstract}

\begin{keywords}
Multi-agent systems, leader-following consensus, noises; time-varying gain; convergence rate.
\end{keywords}

\IEEEpeerreviewmaketitle

\section{Introduction}
Communication noise is an unavoidable factor in the distributed consensus of networked multi-agent systems. It has been found in \cite{Li09Automatica} that the traditional protocol cannot solve the consensus problem with the existence of the communication noise. Therefore, how to effectively attenuate the noise effect becomes an interesting research topic. One popular idea is to employ a time-descending gain (sometimes called the stochastic-approximation type gain) in the consensus protocol. In the early study phase, many scholars investigated the stochastic consensus of first-order integral multi-agent systems with communication noises. For example, under the fixed topology case: the stochastic-approximation type gain was first employed to solve the mean square and almost sure consensus problems \cite{Huang09SIAM}; Li and Zhang proved that the mean square consensus can be achieved in the continuous-time domain if and only if the noise-attenuation gain satisfies the stochastic-approximation type condition \cite{Li09Automatica}; some necessary and sufficient conditions for ensuring the stochastic consensus with both communication noises and time delays in the discrete-time domain were presented in \cite{Liu11Automatica}. Under the switching topology case, the stochastic-approximation type protocols were also proved to be effective in both the discrete-time domain \cite{Huang10TAC, Li10TAC, Liu11Automatica} and the continuous-time domain \cite{Zhang12TAC}. It is also noted that for the continuous-time mean square leader-following consensus problem, the necessary and sufficient conditions of noise-attenuation gains can be slightly relaxed compared to the leaderless case \cite{Wang13CCDC, Wang14Automatica}. The aforementioned papers study the additive noise while the multiplicative noise has also been considered in \cite{Ni13SCL, Li14TAC, Long14IJRNC}. It is usually assumed that the noise intensity is proportional to the state differences between agents. In particular, if the state difference becomes zero, then the noise effect disappears. Therefore, the protocol for solving multiplicative noises may not require the time-descending gain. A few recent results also provided the protocols for higher-order integral multi-agent systems with communication noises. For instance, it has been proved in \cite{Cheng12TAC} that the stochastic-approximation type gain is still the necessary and sufficient condition for ensuring the mean square average consensus of second-order integral multi-agent systems under the fixed topology. Further results regarding the switching topologies were presented in \cite{Cheng13Automatica, Miao14IMA}. For the general linear multi-agent system, there are also some attempts on solving the mean square consensus and the almost sure consensus under the fixed topology \cite{Cheng14TAC} and the switching topology \cite{Wang15TAC}. However, there are still some limitations in the current study of higher-order integral multi-agent systems with communication noises. For example, all agents are required to have the same noise-attenuation gain in most existing publications, which implies that certain global information should be known by all agents; and the convergence rate of the multi-agent system is rarely considered.

Motivated by the above observation, this note first modifies the consensus protocol for general continuous-time linear multi-agent systems proposed in \cite{Cheng14TAC}. In the modified protocol, each agent is equipped with its own noise-attenuation gain. Hence there is no need to assume that all agents have the same gain. It is proved that the mean square leader-following consensus can be reached by the modified protocol if all noise-attenuation gains are infinitesimal of the same order. It is interesting to find that for the leader-following consensus of general linear multi-agent systems, the stochastic-approximation type requirement on the noise-attenuation gains can be relaxed (the square integrable condition is not necessary). Next, the convergence rate of the multi-agent system under the modified protocol is presented. Although the state-transition matrix has been explicitly obtained in \cite{Cheng14TAC} (under the assumption that all agents have the same noise-attenuation gain), the entire dynamical behavior of the multi-agent system can be determined by calculating the solution to the governing stochastic differential equation. It is still difficult to clearly tell the convergence rate of the multi-agent system since the solution to the governing stochastic differential equation is a very complicated function of noise-attenuation gains and the graph Laplacian matrix. Therefore it is more challenging to answer what is the convergence rate of the multi-agent system with the consideration of agent-dependent gains. Fortunately, by some recent results in \cite{Tang14CDC}, if we assume that the noise-attenuation gain belongs to certain representative class of functions, then quantitatively determining the convergence rate becomes possible. In this note, if the noise-attenuation gain belongs to a class of functions bounded above and below by $t^{-\beta}$ $(\beta\in(0,1))$ asymptotically, then the states of all follower agents are convergent in mean square to the leader's state with the rate characterized by a function bounded above by $t^{-\beta}$ asymptotically. This convergence rate analysis is the main improvement of this note compared to the previous conference version \cite{Wang14IFAC}.

\noindent Following notations are used throughout this note: $\mathbb{C}$, ${\mathbb R}$, ${\mathbb N}^+$ denote the field of complex numbers, the field of real numbers and the set of positive integer numbers, respectively; $I_n$ denotes the $n$-dimensional unit matrix; $1_n =(1,\cdots,1)^T \in \mathbb{R}^n$; $0_n = (0,\cdots,0)^T \in\mathbb{R}^n$;
$\otimes$ denotes the Kronecker product;
for a given matrix $X$, $X^T$ denotes its transpose; $\|X\|_2$ denotes its 2-norm;
for a random variable/vector $x$, $E(x)$ denotes its mathematical expectation;
For any two functions $f(t)$ and $g(t)$, $f(t) = \mathcal{O}(g(t))$ represents $\limsup_{t\to\infty}|f(t)/g(t)| < \infty$; $f(t) = o(g(t))$ represents $\lim_{t\to\infty}|f(t)/g(t)| =0$; $f(t) = \Theta(g(t))$ represents $0 < \liminf_{t\to\infty}|f(t)/g(t)| \leq \limsup_{t\to\infty}|f(t)/g(t)| < \infty$;
for any $x \in {\mathbb{C}}$, $\Re(x)$ denotes its real part.

\section{Preliminaries}
Consider a multi-agent system composed of $N+1$ agents labeled from $0$ to $N$. 
The communication network among agents is modeled by a digraph $\cal G = \{\mathcal{V}_\mathcal{G}, \mathcal{E}_\mathcal{G}, \mathcal{A}_\mathcal{G}\}$ where $\mathcal{V}_\mathcal{G} = \{v_0,\cdots,v_{N}\}$ denotes the node set; $\mathcal{E}_\mathcal{G} = \{e_{ij}\} \subseteq \mathcal{V}_\mathcal{G} \times \mathcal{V}_\mathcal{G}$ denotes the edge set; and $\mathcal{A}_\mathcal{G} = [a_{ij}]\in \mathbb{R}^{(N+1)\times(N+1)}$ is the weight matrix. Here $v_i$ represents the $i$th agent; $e_{ij} = (v_j, v_i) \in \mathcal{E}_\mathcal{G}$ means that there is an available communication link from agent $j$ to agent $i$; $a_{ij} \geq 0$ is the communication quality associated with the edge $e_{ij}$ and $a_{ij} > 0$ if $e_{ij} \in \mathcal{E}_\mathcal{G}$, $a_{ij} = 0$ if $e_{ij} \notin \mathcal{E}_\mathcal{G}$. Agent $j$ is called the parent of agent $i$ if $e_{ij} \in \mathcal{E}_\mathcal{G}$. The neighbor set of agent $i$ is the set of all its parent agents, i.e., $\mathcal{N}_i = \{v_j | e_{ij} \in \mathcal{E}_\mathcal{G}\}$. The agent is called the leader if its neighbor set is empty, otherwise the agent is called the follower. The Laplacian matrix of $\mathcal{G}$ is defined as $\mathcal{L} = \mbox{diag}\big(\sum^{N}_{i=0}a_{0i},\cdots,\sum^{N}_{i=0}a_{Ni}\big)- \mathcal{A}_\mathcal{G}$. There is a directed path from node $v_{i_1}$ to node $v_{i_n}$ if there is a set of nodes $\{v_{i_2},\cdots,v_{i_{(n-1)}}\}$ such that edges $e_{i_2i_1},\cdots,e_{i_ni_{(n-1)}}$ all belong to $\mathcal{E}_\mathcal{G}$. The digraph $\cal G$ is called to have a spanning tree if there exists at least one node such that there are directed paths from this node to any other nodes in $\mathcal{G}$. In this note, it is assumed that the communication graph of the multi-agent system has a spanning tree. We assume that this multi-agent system has one leader (the leader is labeled by $0$). In other words, the multi-agent system has a leader-following structure. It is obvious that the Laplacian matrix of such a multi-agent system has the following form
\begin{equation}
{\cal L}=
\begin{bmatrix}
0 & 0_N^T\\
L_1 & L_2
\end{bmatrix}.
\end{equation}
\begin{lemma}[Theorem 2 in \cite{Wang14Automatica}]\label{lem:tree}
If the communication graph $\mathcal{G}$ of the multi-agent system has a spanning tree, then all eigenvalues of $L_2$ have positive real parts. Furthermore for any diagonal matrix $D$ with positive diagonal elements, all eigenvalues of $DL_2$ have positive real parts as well.
\end{lemma}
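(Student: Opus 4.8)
The plan is to localise the spectrum of $DL_2$ by Gershgorin discs (the case $D=I_N$ recovering the first assertion) and then to exclude a zero eigenvalue using the spanning-tree hypothesis. Since $\mathcal{L}$ has zero row sums, $\mathcal{L}1_{N+1}=0$ together with the block form yields $L_1=-L_2 1_N$; and for a follower $i\in\{1,\dots,N\}$ one has $(L_2)_{ii}=\sum_{j=0}^N a_{ij}=:d_i>0$ (its neighbour set is nonempty), $(L_2)_{ij}=-a_{ij}\le 0$ for $j\neq i$, and $\sum_{j\neq i}|(L_2)_{ij}|=\sum_{j=1}^N a_{ij}=d_i-a_{i0}$.

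Fix $D=\diag(\delta_1,\dots,\delta_N)$ with all $\delta_i>0$. The $i$th row of $DL_2$ is $\delta_i$ times the $i$th row of $L_2$, so the $i$th Gershgorin disc of $DL_2$ is $\{z\in\mathbb{C}:|z-\delta_i d_i|\le\delta_i(d_i-a_{i0})\}$. Its centre $\delta_i d_i$ is positive and its radius does not exceed its centre, so the disc lies in the closed right half-plane and meets the imaginary axis at most at the origin (and it does so only when $a_{i0}=0$). Hence every eigenvalue of $DL_2$ has nonnegative real part, and any such eigenvalue lying on the imaginary axis must equal $0$. It therefore remains only to show $\det(DL_2)\neq 0$; since $\det(DL_2)=\big(\prod_i\delta_i\big)\det L_2$, the whole lemma reduces to proving $\det L_2\neq 0$.

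For $\det L_2\neq 0$ I would invoke the spanning tree. The quick way: it is standard that a digraph Laplacian has rank one less than the number of nodes precisely when the digraph has a spanning tree, so $\operatorname{rank}\mathcal{L}=N$; deleting the zero first row and using $L_1=-L_2 1_N$ (which puts the first column in the span of the remaining ones) gives $\operatorname{rank}L_2=\operatorname{rank}[L_1\ L_2]=N$. A self-contained alternative: if $L_2$ were singular there would be a real $x\neq 0$ with $L_2 x=0$; choose $k$ with $|x_k|=\max_j|x_j|$ and rescale $x$ by $1/x_k$ so that $x_k=\max_j|x_j|=1$, hence $x_j\le 1$ for all $j$; set $S=\{i:x_i=1\}\ni k$. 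For $i\in S$ the relation $d_i=\sum_{j=1}^N a_{ij}x_j\le\sum_{j=1}^N a_{ij}=d_i-a_{i0}$ forces $a_{i0}=0$ and $x_j=1$ for every parent $j$ of $i$, so $S$ is closed under taking parents while no element of $S$ has the leader as a parent. But the leader $0$---necessarily the root of any spanning tree, as it has no parents---reaches every node of $S$ along a directed path; such a path lies, apart from $0$ itself, entirely in $S$, so its first edge exhibits a node of $S$ with $0$ as a parent, contradicting $a_{i0}=0$ for $i\in S$.

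I expect the only genuine content to sit in this last step---the passage from the combinatorial hypothesis to $\det L_2\neq 0$; the disc computation, the scaling by $D$, and the determinant identity are routine, and the spanning-tree hypothesis is used exactly once. One could equally package the argument through $M$-matrices: writing $L_2=sI_N-B$ with $B\ge 0$ and $s=\max_i d_i$, the spanning-tree property gives $\rho(B)<s$, so the eigenvalues of $L_2$ are $s$ minus complex numbers of modulus less than $s$ and hence lie in the open right half-plane; and since $DL_2$ is again a $Z$-matrix with $DL_2 v>0$ whenever $L_2 v>0$, it too is a nonsingular $M$-matrix, which yields the ``furthermore'' statement.
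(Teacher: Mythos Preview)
The paper does not supply its own proof of this lemma: it is quoted verbatim as Theorem~2 of \cite{Wang14Automatica} and used as a black box, so there is no in-paper argument to compare against.

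Your argument is correct. The Gershgorin computation cleanly confines the spectrum of $DL_2$ to the closed right half-plane and forces any imaginary-axis eigenvalue to be zero; the reduction of the ``furthermore'' clause to $\det L_2\neq 0$ via $\det(DL_2)=(\prod_i\delta_i)\det L_2$ is valid; and your self-contained contradiction argument for nonsingularity is sound (the key points---that $S$ is closed under taking follower parents, that no $i\in S$ has the leader as a parent, and that the leader must be the root of any spanning tree since it has no incoming edges---are all justified). The rank-based shortcut is also fine once one accepts the standard fact that a digraph Laplacian has rank $N$ exactly when a spanning tree exists. The closing $M$-matrix remark is a legitimate alternative packaging, though as written the step ``the spanning-tree property gives $\rho(B)<s$'' is asserted rather than argued and would itself need essentially the same combinatorial work you already did; since you offer it only as a side comment this is not a defect in the main proof.
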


The dynamics of the $i$th agent is described by
\begin{equation}\label{eq:4}
\dot x_i(t)=Ax_i(t)+Bu_i(t),\;\; i=0,\cdots,N,
\end{equation}
where $x_i(t) \in \mathbb{R}^n$ is the state vector, $u_i(t) \in \mathbb{R}$ is the control input,
\begin{equation}
A=
\begin{bmatrix}
0 & 1 & \cdots & 0 \\
\vdots & \vdots & \ddots & \vdots\\
0 & 0 & \cdots & 1\\
\alpha_1 & \alpha_2& \cdots & \alpha_n
\end{bmatrix}\in{\mathbb R}^{n\times n}, \quad
B=
\begin{bmatrix}
0 \\ \vdots\\0\\1
\end{bmatrix}\in{\mathbb R}^n, \nonumber
\end{equation}
$(\alpha_1,\cdots,\alpha_n)$ are coefficients determined by the agent's essential dynamical characteristics.
It is obvious that any controllable single input single output system can be transformed into this Luenberger canonical form.

The agents exchange information via a noisy communication network. The information which the $i$th agent receives from the $j$th agent is denoted by $y_{ij}(t)=x_j(t)+\rho_{ij}\eta_{ij}(t)$
where  $\eta_{ij}(t)=(\eta_{ij1}(t),\cdots,\eta_{ijn}(t))^T\in{\mathbb R}^{n}$ is the $n$-dimensional standard white noise;
$\rho_{ij}=\diag(\rho_{ij1},\cdots,\rho_{ijn})\in{\mathbb R}^{n\times n}$ ($|\rho_{ijl}|<\infty$, $l=1,\cdots,n$) denotes the noise intensity matrix.
It is assumed that $\eta_{ijl}(t)$ ($i,j=0,1,\cdots,N$; $l=1,\cdots,n$) are all mutually independent.

The control objective is to achieve the mean square leader-following consensus. That is: design control inputs $u_i(t)$ by using agent $i$ and its neighbors' information such that $\lim_{t\to\infty}E\|x_i(t) - x_0(t)\|^2 = 0$ and $\limsup_{t\to\infty}E\|x_i(t)\|^2 \leq \infty$, $\forall i=0,1,\cdots,N$.

In this note, the leader-following consensus protocol for the $i$th agent is proposed as follows
\begin{equation}\label{eq:3}
u_i(t)\!=\!K_1x_i(t)\!+\!a_i(t)\!\!\sum\nolimits_{j\in{\cal N}_i}\!\!a_{ij}K_2(y_{ij}(t)\!-\!x_i(t)), i=0,\cdots,N,
\end{equation}
where $a_i(t)>0$ is the consensus gain for the $i$th agent; $K_1=(-\alpha_1,-\alpha_2-b_1,\cdots,-\alpha_{n}-b_{n-1})$ and $K_2=(b_1,\cdots,b_{n-1},1)$; $(b_1,\cdots,b_{n-1})$ are parameters to be determined later. It is easy to see that the proposed protocol is different from the ones proposed in \cite{Cheng14TAC,Wang15TAC} because each agent has its own consensus gain $a_i(t)$.

\section{Main Results}

Let $X_F(t)=\big(x_1^T(t),\cdots,x_N^T(t)\big)^T$. Then substituting \eqref{eq:3} into \eqref{eq:4} obtains that
\begin{multline}
\dot X_F(t)=(I_N\otimes (A+BK_1)-\frak{A}(t)L_2\otimes BK_2)X_F(t)\\-\frak{A}(t)L_1\otimes BK_2x_0(t)+\frak{A}(t)\Sigma \eta(t),\nonumber
\end{multline}
where $\frak{A}(t)=\diag(a_1(t),\cdots,a_N(t))$, $\Sigma=\diag(\Sigma_1,\cdots,\Sigma_N)$, $\Sigma_i=BK_2(\rho_{i0},\rho_{i1},\cdots,\rho_{iN})$ and $\eta(t)$ is the $nN(N+1)$-dimensional standard white noise vector composed of $\eta_{ij}(t)$, $i,j =0,1,\cdots, N$.

Let $\bar X_F(t)=X_F(t)-1_N\otimes x_0(t)$. Then
\begin{equation}
  \dot{\bar X}_F(t)=(I_N\otimes (A+BK_1)-\frak{A}(t)L_2\otimes BK_2)\bar X_F(t)+\frak{A}(t)\Sigma \eta(t).\nonumber
\end{equation}
Let $\hat X(t)=(I_N\otimes K_2) \bar X_F(t)$.
It is easy to see that $K_2(A+BK_1)=0_N^T$ and $K_2BK_2=K_2$.
Therefore,
\begin{IEEEeqnarray}{ll}
	\dot{\hat X}(t)=-\frak{A}(t)L_2\hat X(t)+(I_N\otimes K_2)\frak{A}(t)\Sigma\eta(t).\label{eq:9}
\end{IEEEeqnarray}

Let $\Phi(t,t_0)$ denote the state transition matrix of $\dot{\Xi}(t)=-\frak{A}(t)L_2\Xi(t)$.
By It\^o integral formula, the solution to \eqref{eq:9} can be written as
\begin{IEEEeqnarray}{ll}\label{eq:reducedclosed}
	 {\hat X}(t)=J_1(t,t_0)+J_2(t,t_0),
\end{IEEEeqnarray}
where $J_1(t,t_0)=\Phi(t,t_0)\hat X(t_0)$, $J_2(t,t_0)=\int_{t_0}^t\Phi(t,s)(I_N\otimes K_2)\frak{A}(s)\Sigma dW(s)$ and $W(t)$ is the $nN(N+1)$-dimensional standard Brownian motion corresponding to $\eta(t)$.

Throughout this note, the following four conditions hold.
 \labelformat{enumi}{({C}#1)}
  \renewcommand\labelenumi{({C}\arabic{enumi}):}
\begin{enumerate}
	\item\label{cond:int} $\int_{0}^\infty \bar a(t)dt=\infty$, where $\bar a(t)=\max_{i=1,\cdots,N}\{a_i(t)\}$.
	\item\label{cond:class} There exist positive constants $\mu_1\le \mu_2<\infty$, $T<\infty$ and $\beta\in(0,1)$ such that for $\forall t>T$, $\mu_1t^{-\beta}\le  a_i(t)\le \mu_2 t^{-\beta}$, $i=1,\cdots,N$.
	\item\label{cond:inf} All consensus gains $\{a_1(t),\cdots,a_N(t)\}$ are infinitesimal of the same order as time goes to infinity.
	\item\label{cond:root} All roots of the following polynomial have negative real parts
	\begin{equation}
	s^{n+1}+b_{n-1}s^{n-2}+\cdots+b_2s+b_1=0.
	\end{equation}
\end{enumerate}
Since $a_1(t),\cdots, a_N(t)$ are infinitesimal of the same order as time goes to infinity, there must exist $N$ positive constants $c_1,\cdots,c_N$ such that $\lim_{t\to\infty}{a_i(t)}\big/{\bar a(t)}=c_i$, $i=1,\cdots,N$.

\begin{thm}\label{thm:1}
	If Conditions \ref{cond:int}--\ref{cond:root} hold, then the proposed protocol defined by \eqref{eq:3} can solve the mean square leader-following consensus problem of \eqref{eq:4}. Furthermore, the convergence rate of the multi-agent system is characterized by $\|E(x_i(t)-x_0(t))\|_2=\mathcal{O}(e^{\frac{-\mu_1(\lambda_{\min}-\varepsilon)}{1-\beta}}t^{1-\beta})$ and $E\|x_i(t)-x_0(t)\|_2^2=\mathcal{O}(t^{-\beta})$ $(i=1,\cdots,N)$, where $\mu_1>0$ and $\beta\in(0,1)$ are defined in \ref{cond:class}; $\lambda_{\min}=\min\{\Re(\lambda_1),\cdots,\Re(\lambda_N)\}>0$, $\lambda_1,\cdots,\lambda_N$ are eigenvalues of $CL_2$ and $C=\diag(c_1,\cdots,c_N)$; and $\varepsilon$ is any constant in $(0,\min\{1,\lambda_{\min}\})$.
\end{thm}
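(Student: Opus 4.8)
The plan is to decompose $\hat X(t) = J_1(t,t_0) + J_2(t,t_0)$ as already set up in \eqref{eq:reducedclosed} and control the two terms separately, then transfer the estimates back to $x_i(t) - x_0(t)$ via the reduction $\hat X(t) = (I_N \otimes K_2)\bar X_F(t)$ and Condition \ref{cond:root} (which makes $A + BK_1$ Hurwitz so that $\|\bar X_F(t)\|$ is controlled by $\|\hat X(t)\|$ together with the exponentially decaying internal dynamics). First I would analyze the deterministic part $J_1(t,t_0) = \Phi(t,t_0)\hat X(t_0)$, which also governs $E(\hat X(t))$ since $J_2$ has zero mean. The key is to estimate the state-transition matrix $\Phi(t,s)$ of $\dot\Xi = -\frak{A}(t)L_2\Xi$. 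Writing $\frak{A}(t) = \bar a(t)(C + o(1))$ via the infinitesimal-of-same-order hypothesis and the constants $c_i$ from \ref{cond:inf}, and using that all eigenvalues of $CL_2$ have positive real parts (Lemma \ref{lem:tree}, since $C$ is diagonal with positive entries), I expect $\|\Phi(t,t_0)\|_2 = \mathcal{O}\!\left(\exp\!\left(-(\lambda_{\min}-\varepsilon)\int_{t_0}^t \bar a(s)\,ds\right)\right)$ for $t_0$ large enough; then plugging in $\bar a(s) \ge \mu_1 s^{-\beta}$ from \ref{cond:class} and $\int_{t_0}^t s^{-\beta}ds = \frac{t^{1-\beta} - t_0^{1-\beta}}{1-\beta}$ yields the stated bound $\mathcal{O}(e^{-\frac{\mu_1(\lambda_{\min}-\varepsilon)}{1-\beta}t^{1-\beta}})$ on $\|E(x_i(t)-x_0(t))\|_2$. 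Condition \ref{cond:int} is what guarantees this exponent diverges, giving convergence of the mean.

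Next I would bound the second-moment term. By the It\^o isometry,
\begin{equation}
E\|J_2(t,t_0)\|_2^2 = \int_{t_0}^t \big\|\Phi(t,s)(I_N\otimes K_2)\frak{A}(s)\Sigma\big\|_F^2\, ds \le c \int_{t_0}^t \|\Phi(t,s)\|_2^2\, a^2(s)\, ds,\nonumber
\end{equation}
where $a(s)$ is a scalar dominating all $a_i(s)$ and $c$ collects the constant matrix norms $\|(I_N\otimes K_2)\|_2^2\|\Sigma\|_2^2$. Using the exponential bound on $\|\Phi(t,s)\|_2$ together with $a(s) = \Theta(s^{-\beta})$ from \ref{cond:class}, the integral $\int_{t_0}^t \exp(-2(\lambda_{\min}-\varepsilon)\int_s^t \bar a(r)\,dr)\, s^{-2\beta}\, ds$ must be shown to be $\mathcal{O}(t^{-\beta})$. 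This is precisely the type of estimate handled by the results of \cite{Tang14CDC}: heuristically, the convolution of an exponentially-decaying-at-rate-$\bar a(s)$ kernel against $s^{-2\beta}$ behaves like $s^{-2\beta}/\bar a(s) \sim s^{-2\beta}/s^{-\beta} = s^{-\beta}$ near the upper limit, since the kernel concentrates mass $\sim 1/\bar a(t)$. I would make this rigorous by splitting the integral at, say, $t/2$: on $[t_0, t/2]$ the kernel is exponentially small (super-polynomially beating $s^{-2\beta}$), and on $[t/2, t]$ one bounds $s^{-2\beta} \le (t/2)^{-2\beta}$ and integrates the kernel, picking up the factor $1/\bar a(t) = \mathcal{O}(t^\beta)$. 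Combining the two terms and noting $t^{-\beta}$ dominates the exponentially small $\|J_1\|_2^2$ gives $E\|\hat X(t)\|_2^2 = \mathcal{O}(t^{-\beta})$, hence $E\|x_i(t)-x_0(t)\|_2^2 = \mathcal{O}(t^{-\beta})$ after the back-transformation; in particular this tends to $0$ and $\limsup_t E\|x_i(t)\|^2 < \infty$ follows since $x_0(t)$ has bounded-growth deterministic dynamics on any finite horizon and the consensus error vanishes.

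The main obstacle I anticipate is the first step: obtaining the sharp exponential bound on $\|\Phi(t,s)\|_2$ for the time-varying system $\dot\Xi = -\frak{A}(t)L_2\Xi$, because $\frak{A}(t)L_2$ need not be symmetric or normal and its "frozen-time" eigenvalues $\bar a(t)\lambda_i(CL_2)$ only determine stability asymptotically, not uniformly. The standard route is to pass to a coordinate system diagonalizing (or block-triangularizing into Jordan form) $CL_2$, write $\frak{A}(t)L_2 = \bar a(t)CL_2 + \bar a(t)(\frak{A}(t)/\bar a(t) - C)L_2$ and treat the second piece as a vanishing perturbation, then invoke a Gronwall/Coppel-type argument so that for any $\varepsilon>0$ there is a $T_\varepsilon$ beyond which the real parts of the effective eigenvalues exceed $\lambda_{\min}-\varepsilon$ times $\bar a(t)$; the Jordan-block polynomial prefactors are absorbed into $\varepsilon$. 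This is exactly where the freedom "$\varepsilon$ any constant in $(0,\min\{1,\lambda_{\min}\})$" in the statement comes from. Everything else — the It\^o isometry, the splitting of the convolution integral, and the linear-algebra reductions $K_2(A+BK_1)=0$, $K_2BK_2=K_2$ — is routine once this transition-matrix estimate is in hand.
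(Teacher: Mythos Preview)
Your proposal follows the paper's proof closely: the same decomposition $\hat X = J_1 + J_2$, the same perturbation argument $\frak{A}(t)L_2 = \bar a(t)CL_2 + \bar a(t)\,o(1)\cdot L_2$ for the transition-matrix bound (packaged in the paper as Lemmas~\ref{lem:2}--\ref{lem:4}, proved via Jordan form plus a Gronwall/comparison step exactly as you sketch), the same It\^o-isometry estimate on $J_2$ with the convolution integral handled by the \cite{Tang14CDC} lemma, and the same use of Condition~\ref{cond:root} for the back-transfer.

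There is one concrete error in your back-transfer step: $A+BK_1$ is \emph{not} Hurwitz. Its first column vanishes (row $n$ of $A+BK_1$ is $(0,-b_1,\ldots,-b_{n-1})$), so $0$ is always an eigenvalue --- this is precisely why the leader converges to $(v,0,\ldots,0)^T$ rather than to the origin. Consequently you cannot argue that ``$\|\bar X_F\|$ is controlled by $\|\hat X\|$ plus exponentially decaying internal dynamics'' via Hurwitzness of $A+BK_1$. The correct mechanism, which the paper isolates as Lemma~\ref{lem:7}, is that the first $n-1$ rows of the $\bar x_i$-dynamics are pure integrators (noise enters only through $B$, i.e.\ in the last row), so $\bar x_{i,k}=\bar x_{i,1}^{(k-1)}$ and $\hat x_i = K_2\bar x_i$ reads as the $(n-1)$th-order scalar ODE
\[
\bar x_{i,1}^{(n-1)}+b_{n-1}\bar x_{i,1}^{(n-2)}+\cdots+b_1\bar x_{i,1}=\hat x_i(t),
\]
whose characteristic polynomial is Hurwitz by \ref{cond:root}. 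One then needs a lemma showing that such a stable linear filter preserves both input rates $\mathcal{O}(e^{-\mu t^{1-\beta}})$ (for the mean) and $\mathcal{O}(t^{-\beta})$ (for the mean square); this is the content of the paper's Lemma~\ref{lem:7} and occupies a substantial part of the appendix, so it deserves more than a parenthetical.
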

\begin{proof}
Since $\cal G$ has a spanning tree, by Lemma \ref{lem:tree}, $\lambda_{\min} > 0$. It is easy to see that $\frak{A}(t)L_2=\bar a(t)CL_2 +\bar a(t) D(t)L_2$, where $D(t)=(d_1(t),\cdots,d_N(t))$ and $d_i(t)={a_i(t)}\big/{\bar a(t)}-c_i$.
Then $\lim_{t\to
\infty}D(t)$ is a zero matrix.
Therefore, by Lemmas \ref{lem:6} and \ref{lem:4}, for $\forall \varepsilon \in(0,\min\{1,\lambda_{min}\})$,  there must exist two positive constants $M_1, M_2<\infty$ such that for $\forall t>t_0$,
\begin{equation}\label{eq:41}
\|\Phi(t,t_0)\|_2\le M_1 e^{-(\lambda_{\min}-\varepsilon)\int_{t_0}^t\bar a(s)ds}\le M_2 e^{-\frac{\mu_1(\lambda_{\min}-\varepsilon)}{1-\beta}t^{1-\beta}}.
\end{equation}
Therefore
\begin{equation}\label{eq:tmp1}
\|J_1(t,t_0)\|_2=\mathcal{O}(e^{-\frac{\mu_1(\lambda_{\min}-\varepsilon)}{1-\beta}t^{1-\beta}}).
\end{equation}
Moreover, it is easy to see that $E(J_2(t,t_0))=0_N$.
Therefore, $\forall i=1,\cdots,N$,
\begin{equation}\label{eq:40}
E(\hat x_i(t))=E(K_2(x_i(t)-x_0(t)))=\mathcal{O}(e^{-\frac{\mu_1(\lambda_{\min}-\varepsilon)}{1-\beta}t^{1-\beta}}),
\end{equation}
which together with Lemma \ref{lem:7} leads to the fact that $\|E(x_i(t)-x_0(t))\|_2=\mathcal{O}(e^{\frac{-\mu_1(\lambda_{\min}-\varepsilon)}{1-\beta}}t^{1-\beta})$.

It can be calculated that
\begin{IEEEeqnarray}{ll}
	&\|E(J_2(t,t_0)J_2^T(t,t_0))\|_2\IEEEnonumber\\&=\left\|\int_{t_0}^t\Phi(t,s)(I_N\otimes K_2)\frak{A}(s)\Sigma\Sigma^T\frak{A}^T(s)(I_N^T\otimes K_2^T) \Phi^T(t,s)ds\right\|_2\nonumber\\
	&\le \|I_N\otimes K_2\|_2^2\|\Sigma\|_2^2\int_{t_0}^t\|\Phi(t,s)\|_2^2\|\frak{A}(s)\|_2^2 ds\nonumber\\
	&\le \|I_N\otimes K_2\|_2^2\|\Sigma\|_2^2M_1^2\int_{t_0}^t\bar a^2(s)e^{-2\mu_1(\lambda_{\min}-\varepsilon)\int_{s}^t\bar a(\tau)d\tau}ds,\nonumber
\end{IEEEeqnarray}
which implies that
	\begin{multline}
	E\|J_2(t,t_0)\|_2^2=\mathcal{O}(\|E(J_2(t,t_0)J_2^T(t,t_0))\|_2)\\=\mathcal{O}\left(\int_{t_0}^t\bar a^2(s)e^{-2\mu_1(\lambda_{\min}-\varepsilon)\int_{s}^t\bar a(\tau)d\tau}ds\right).
	\end{multline}

	By the same procedure of Lemma A.2 in \cite{Tang14CDC}, it can be proved that
    \begin{equation*}
    \int_{t_0}^t\bar a^2(s)e^{-2\mu_1(\lambda_{\min}-\varepsilon)\int_{s}^t\bar a(\tau)d\tau}ds=\mathcal{O}(t^{-\beta}).
    \end{equation*}
	Hence, $E\|\hat X(t)\|_2^2=\mathcal{O}(t^{-\beta})$, which indicates that $E|K_2(x_i(t)-x_0(t))|^2=\mathcal{O}(t^{-\beta})$, $i=1,\cdots,N$. By Lemma \ref{lem:7} and Condition \ref{cond:root}, it can be obtained that $E\|x_i(t)-x_0(t)\|_2^2=\mathcal{O}(t^{-\beta})$, $i=1,\cdots,N$.

Finally, by \eqref{eq:3}, the closed-loop dynamics of the leader agent is $\dot x_0(t)=(A+BK_1)x_0(t)$.	Since Condition \ref{cond:root} holds, there must exist a constant $v$ such that $\lim_{t\to\infty}x_0(t)=x^*\equiv(v,0,\cdots,0)^T\in{\mathbb R}^n$. Therefore, $E\|x_i(t)\|^2_2 < \infty$, $i=0,\cdots,N$.
\end{proof}

%

\begin{remark}
Compared to the leaderless consensus studied in \cite{Tang14CDC}, it is interesting to see that $\beta$ in \ref{cond:class} can belong to $(0,0.5)$, which means that the square integrable condition on $a_i(t)$ ($\int^{\infty}_0a^2_i(s)ds < \infty$) is not necessary. From this point of view, the leader-following consensus seems easier to be achieved than the leaderless one.
\end{remark}

\section{Simulation Examples}
\begin{figure}
	\centering
	\psfrag{a}{\scriptsize$\|x_1(t)-x_0(t)\|_2^2$}
	\psfrag{b}{\scriptsize$\|x_2(t)-x_0(t)\|_2^2$}
	\psfrag{c}{\scriptsize$\|x_3(t)-x_0(t)\|_2^2$}
	\psfrag{d}{\scriptsize$\|x_4(t)-x_0(t)\|_2^2$}
	\psfrag{e}{\large $\frac{5}{t^{0.4}}$}
\psfrag{t}{\scriptsize Time (second)}
	\includegraphics[width=0.5\textwidth]{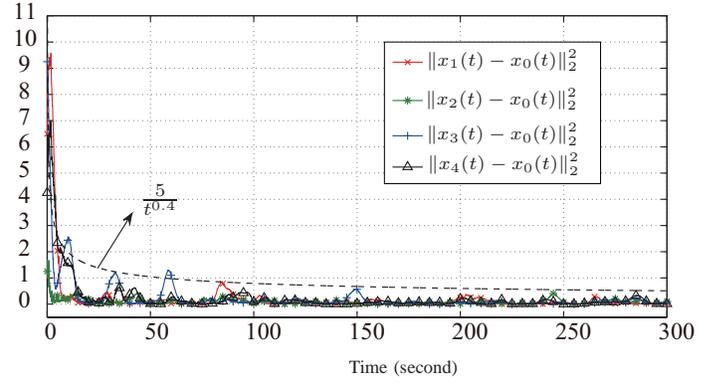}
	\caption{The trajectories of the state differences between the follower agents and the leader agent under the proposed protocol.}\label{fig:1}
\end{figure}

\begin{figure}
	\centering
	\psfrag{a}{\scriptsize$\|x_0(t)\|_2$}
	\psfrag{b}{\scriptsize$\|x_1(t)\|_2$}
	\psfrag{c}{\scriptsize$\|x_2(t)\|_2$}
	\psfrag{d}{\scriptsize$\|x_3(t)\|_2$}
	\psfrag{e}{\scriptsize$\|x_4(t)\|_2$}
\psfrag{t}{\scriptsize Time (second)}
	\includegraphics[width=0.5\textwidth]{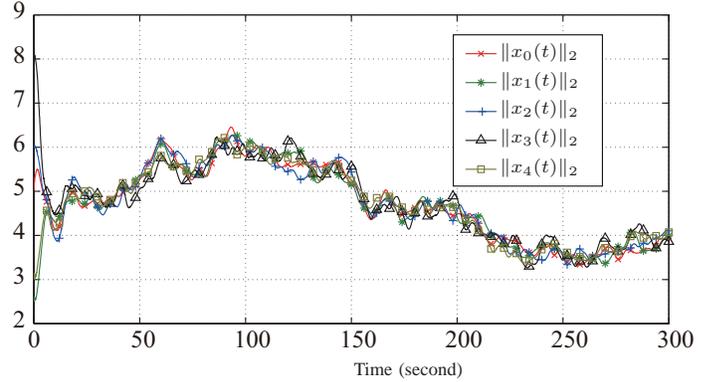}
	\caption{The trajectories of the norms of all agents' states under the leaderless case.}\label{fig:2}
\end{figure}
Consider a multi-agent system composed of five agents. In \eqref{eq:4}, $x_i(t) \in \mathbb{R}^4$, $\alpha_1=-1$, $\alpha_2=1$, $\alpha_3=0$ and $\alpha_4 = -2$. The elements of the weight matrix $\mathcal{A}_\mathcal{G}$ are set as: $a_{10} = a_{20} = a_{24} = a_{41} = a_{43} = 1$, $a_{31} = 2$ and all other elements are zero. The noise intensities $\rho_{ijl}$ in \eqref{eq:9} are all set to be $1$. The controller parameters in \eqref{eq:3} are: $K_1 = (1,-2,-3,-1)$, $K_2 = (1,3,3,1)$, $a_0(t) = 0.15/(t+1)^{0.4}$, $a_1(t) = 1.2/(t+1)^{0.4}$, $a_2(t) = 1.5/(3t+1)^{0.4}$, $a_3(t) = 0.6/(t+2)^{0.4}$ and $a_4(t) = 1.5/(4t+1)^{0.4}$. From the simulation results shown in Fig. \ref{fig:1}, it can be seen that the leader-following consensus can be achieved in the mean square sense. In addition, the trajectories of the state differences between the follower agents and the leader agent are mostly upper-bounded by $5/t^{0.4}$, which is consistent with the theoretical analysis on the convergence rate.

Next, a simulation example is conducted to show that the leaderless consensus studied in \cite{Cheng14TAC,Tang14CDC} needs the square integrable condition on the noise-attenuation gain. To this end, an extra edge $e_{04}$ is added ($a_{04}$ associated with $e_{04}$ is set to be $1$), which results in a multi-agent system without any leader. In this case, $\beta$ cannot be set to be any value belonging to $(0,0.5)$. For example, we set all noise-attenuation gains as the same value $a_i(t) = 1/(1+t)^{0.4}$, $i=0,\cdots,4$. The simulation result is given in Fig. \ref{fig:2}. By the definition of leaderless consensus in \cite{Cheng14TAC,Tang14CDC}, there must exist a random vector $x^*$ satisfying $E\|x^*\|^2 < \infty$ such that $\lim_{t\to\infty}E\|x_i(t) - x^*\|^2 = 0$, $i=0,\cdots,4$. This definition implies that $\lim_{t\to\infty}E\|x_i(t)\|^2 < \infty$, $i=0,\cdots,4$. Although by Fig. \ref{fig:2}, it seems that $\limsup_{t\to\infty}E\|x_i(t) - x_j(t)\|^2 < \infty$, the requirement that $\lim_{t\to\infty}E\|x_i(t)\|^2 < \infty$ ($i=0,\cdots,4$) cannot be satisfied.

\section{Conclusions}
This technical note relaxes the assumption made in \cite{Cheng14TAC,Wang15TAC} that all agents have the same noise-attenuation gain. Each agent is allowed to have its own time-varying gain function. It is proved that if all consensus gains are infinitesimal of the same order, then the modified protocol can still solve the mean square leader-following consensus problem of general linear multi-agent systems. In addition, this note presents the convergence rate of the multi-agent system when the noise-attenuation gains belong to a representative class of functions.

\begin{appendix}
\begin{lemma}\label{lem:6}
If Condition \ref{cond:class} holds, then $\forall b>0$, $e^{-b\int_{t_0}^t\bar a(s)ds}=\mathcal{O}(e^{\frac{-b\mu_1}{1-\beta}t^{1-\beta}})=o(t^{-\beta})$, where $\beta\in(0,1)$ is defined in Condition \ref{cond:class}.
\end{lemma}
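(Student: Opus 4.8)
The plan is to convert Condition \ref{cond:class} into an explicit lower bound on the integral $\int_{t_0}^{t}\bar a(s)\,ds$, exponentiate that bound, and then compare the resulting stretched exponential against the polynomial $t^{-\beta}$.

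\textbf{Step 1 (lower-bounding the integral).} Since $\bar a(t)=\max_{i}a_i(t)$, Condition \ref{cond:class} immediately gives $\bar a(s)\ge\mu_1 s^{-\beta}$ for every $s>T$. Only large $t$ matters for the asymptotics, so I would assume $t>t_0>T$ (otherwise split the integral at $T$; the part over $[t_0,T]$ contributes only a bounded constant that is harmless). Then, using $\beta\in(0,1)$,
\[
\int_{t_0}^{t}\bar a(s)\,ds\;\ge\;\mu_1\int_{t_0}^{t}s^{-\beta}\,ds\;=\;\frac{\mu_1}{1-\beta}\bigl(t^{1-\beta}-t_0^{1-\beta}\bigr).
\]
Multiplying by $-b<0$ and exponentiating,
\[
e^{-b\int_{t_0}^{t}\bar a(s)\,ds}\;\le\;e^{\frac{b\mu_1}{1-\beta}t_0^{1-\beta}}\;e^{-\frac{b\mu_1}{1-\beta}t^{1-\beta}}.
\]
The prefactor is finite and independent of $t$, so the left side is $\mathcal{O}\bigl(e^{-\frac{b\mu_1}{1-\beta}t^{1-\beta}}\bigr)$, which is the first claimed estimate.

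\textbf{Step 2 (the little-$o$ bound).} Setting $c=\frac{b\mu_1}{1-\beta}>0$, it remains to show $e^{-ct^{1-\beta}}=o(t^{-\beta})$, i.e.\ $t^{\beta}e^{-ct^{1-\beta}}\to0$. The substitution $u=t^{1-\beta}$ (so $u\to\infty$ and $t^{\beta}=u^{\beta/(1-\beta)}$) turns this into $u^{\beta/(1-\beta)}e^{-cu}\to0$, which holds because an exponential dominates any fixed power. Chaining the two steps yields the lemma.

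I do not anticipate a genuine obstacle: the only point needing a word of care is that the pointwise bound $\bar a(s)\ge\mu_1 s^{-\beta}$ is valid only for $s>T$, so one must restrict attention to large $t$ (or peel off the bounded contribution of $[t_0,T]$) before integrating. The remaining ingredients — the elementary integral of $s^{-\beta}$ and the fact that stretched exponentials decay faster than any polynomial — are entirely standard.
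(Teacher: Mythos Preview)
Your proposal is correct and follows essentially the same route as the paper: lower-bound $\bar a(s)$ by $\mu_1 s^{-\beta}$, integrate explicitly to obtain the $\mathcal{O}(e^{-\frac{b\mu_1}{1-\beta}t^{1-\beta}})$ estimate, and then substitute $u=t^{1-\beta}$ to reduce the $o(t^{-\beta})$ claim to the standard fact that exponentials dominate powers. The only cosmetic difference is that the paper spells out this last fact via repeated L'H\^opital applied to $x^{n}/e^{cx}$ with $n\ge\beta/(1-\beta)$, whereas you simply cite it; your handling of the threshold $T$ is, if anything, slightly more careful than the paper's.
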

\begin{proof}
	If Condition \ref{cond:class} holds, then $\bar a(t)>\mu_1t^{-\beta}$. Therefore,
	\begin{multline*}
    e^{-b\int_{t_0}^t\bar a(s)ds}\le e^{-b\int_{t_0}^t\mu_1s^{-\beta}ds}=e^{-\frac{b\mu_1}{1-\beta}(t^{1-\beta}-t_0^{1-\beta})}
    \\=\mathcal{O}(e^{-\frac{b\mu_1}{1-\beta}t^{1-\beta}}).
    \end{multline*}
    Let $x = t^{1-\beta}$, then $t^\beta=x^{\beta/(1-\beta)}$. Since $\beta>0$, there must exist a positive integer $n$ such that $\frac{\beta}{1-\beta}\le n$.
	By L'H\^ospotal's rule, it is obtained that
	\begin{multline}
    \!\!\lim_{t\to\infty}\frac{e^{-\frac{b\mu_1}{1-\beta}t^{1-\beta}}}{t^{-\beta}}=\lim_{x\to\infty}\frac{e^{-\frac{b\mu_1}{1-\beta}x}}{x^{-\frac{\beta}{1-\beta}}}=\lim_{x\to\infty}\frac{x^{\frac{\beta}{1-\beta}}}{e^{\frac{b\mu_1}{1-\beta}x}}\le \lim_{x\to\infty}\frac{x^n}{e^{\frac{b\mu_1}{1-\beta}x}}\\= \lim_{x\to\infty}\frac{nx^{n-1}}{\frac{b\mu_1}{1-\beta}e^{\frac{b\mu_1}{1-\beta}x}}=\cdots=\lim_{x\to\infty}\frac{n!}{(\frac{b\mu_1}{1-\beta})^ne^{\frac{b\mu_1}{1-\beta}x}}=0.\nonumber
	\end{multline}
	Hence, $e^{-b\int_{t_0}^t\bar a(s)ds}=\mathcal{O}(e^{-\frac{b\mu_1}{1-\beta}t^{1-\beta}})={o}(t^{-\beta})$.
\end{proof}

\begin{lemma}\label{lem:3}
	Consider the following differential equation
	\begin{equation}\label{eq:1}
	\dot x(t)=-\lambda a(t)(1-b(t)) x(t),
	\end{equation}
	where $a(t)\ge 0$, $\lim_{t\to\infty} b(t)=0$ and $\Re(\lambda)>0$ ($\lambda \in \mathbb{C}$). For $\forall \varepsilon>0$, there exists a positive constant $M<\infty$ such that
	$|x(t)|\le M e^{-(\Re(\lambda)-\varepsilon)\int_{t_0}^ta(s)ds}|x(t_0)|$.
\end{lemma}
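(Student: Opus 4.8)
The plan is to integrate \eqref{eq:1} explicitly, since it is a scalar first-order linear homogeneous ODE with a (possibly complex) coefficient. By the integrating factor its solution is
\[
x(t) = x(t_0)\exp\!\left(-\lambda\int_{t_0}^t a(s)\big(1-b(s)\big)\,ds\right),
\]
so that, writing $\Re\big(\lambda(1-b(s))\big)=\Re(\lambda)-\Re(\lambda b(s))$,
\[
|x(t)| = |x(t_0)|\exp\!\left(-\int_{t_0}^t a(s)\big(\Re(\lambda)-\Re(\lambda b(s))\big)\,ds\right).
\]
Equivalently one may avoid the closed form and observe $\tfrac{d}{dt}|x(t)|^2=-2a(t)\big(\Re(\lambda)-\Re(\lambda b(t))\big)|x(t)|^2$; this leads to the same estimate after integration.

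Next I would exploit $\lim_{t\to\infty}b(t)=0$. Without loss of generality take $\varepsilon\in(0,\Re(\lambda))$, since for larger $\varepsilon$ the right-hand side of the claimed bound is only larger and the inequality weaker. Because $|\Re(\lambda b(t))|\le|\lambda|\,|b(t)|\to 0$, there is $T\ge t_0$ with $|\Re(\lambda b(t))|\le\varepsilon$ for all $t\ge T$. Splitting $\int_{t_0}^t=\int_{t_0}^T+\int_T^t$ for $t\ge T$: on $[T,t]$ we have $\Re(\lambda)-\Re(\lambda b(s))\ge\Re(\lambda)-\varepsilon$, which produces the decay factor $e^{-(\Re(\lambda)-\varepsilon)\int_T^t a(s)\,ds}$; on the compact interval $[t_0,T]$ the exponent $-\int_{t_0}^T a(s)\big(\Re(\lambda)-\Re(\lambda b(s))\big)\,ds$ is bounded above by the finite constant $C_1:=\int_{t_0}^T a(s)|\Re(\lambda b(s))|\,ds$ (finite because $a$ and $b$ are, e.g., continuous, as they are in the application). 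Rewriting $\int_T^t a=\int_{t_0}^t a-\int_{t_0}^T a$ then yields, for all $t\ge T$,
\[
|x(t)|\le |x(t_0)|\,e^{C_1+(\Re(\lambda)-\varepsilon)\int_{t_0}^T a(s)\,ds}\,e^{-(\Re(\lambda)-\varepsilon)\int_{t_0}^t a(s)\,ds}.
\]

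Finally I would take $M$ to be the maximum of the constant $e^{C_1+(\Re(\lambda)-\varepsilon)\int_{t_0}^T a(s)\,ds}$ just obtained and a constant covering $t\in[t_0,T]$: on that bounded interval $|x(t)|$ is bounded by $|x(t_0)|$ times a finite constant (again by continuity of $a,b$), while $e^{-(\Re(\lambda)-\varepsilon)\int_{t_0}^t a(s)\,ds}$ stays bounded below by a positive constant there, so a finite $M$ suffices. This $M<\infty$ is uniform over $t>t_0$, which is exactly the assertion. I do not expect a genuine obstacle here: the argument is an elementary ODE estimate. The only points needing a little care are ensuring the initial segment $[t_0,T]$ contributes only a bounded multiplicative constant (mild regularity of $a,b$) and the bookkeeping that recasts $\int_T^t$ as $\int_{t_0}^t$ in the exponent without changing the form of the bound.
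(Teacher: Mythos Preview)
Your argument is correct and follows essentially the same route as the paper's own proof: solve the scalar ODE explicitly, pick a threshold $T$ beyond which the perturbation $b(\cdot)$ is small enough that the effective decay rate exceeds $\Re(\lambda)-\varepsilon$, split $\int_{t_0}^t=\int_{t_0}^T+\int_T^t$, and absorb the bounded-interval contribution into the constant $M$. The only cosmetic difference is that the paper sets $\delta=\varepsilon/\Re(\lambda)$ and works with the condition $b(t)<\delta$ (treating $b$ as real), whereas you bound $|\Re(\lambda b(t))|\le\varepsilon$ directly; your version is slightly cleaner in this respect.
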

\begin{proof}
	The solution to \eqref{eq:1} is $x(t)=e^{-\lambda\int_{t_0}^ta(s)(1-b(s))ds}x(t_0)$. Let $\delta=\varepsilon/\Re(\lambda)>0$. Then there must exist a finite positive constant $T \geq t_0$ such that $\forall t>T$, $1-b(t)>1-\delta$.
	\begin{itemize}
		\item If $t_0\le T\le t$, then
		\begin{IEEEeqnarray}{ll}
			|x(t)|& \le e^{-\Re(\lambda)\int_{t_0}^ta(s)(1-b(s))ds}|x(t_0)|\nonumber\\
			&\le e^{-\Re(\lambda)\int_{T}^ta(s)(1-\delta)ds}e^{-\Re(\lambda)\int_{t_0}^Ta(s)(1-\delta+\delta-b(s))ds}|x(t_0)|\nonumber\\
			&=M_1e^{-\Re(\lambda)\int_{t_0}^ta(s)(1-\delta)ds}|x(t_0)|,\nonumber
		\end{IEEEeqnarray}
		where $M_1=e^{\Re(\lambda)\int_{t_0}^Ta(s)(b(s)-\delta)ds}<\infty$.
		\item If $t_0<t<T$, then
		\begin{IEEEeqnarray}{ll}
			|x(t)|& \le e^{-\Re(\lambda)\int_{t_0}^ta(s)(1-b(s))ds}|x(t_0)|\nonumber\\
			&\le e^{-\Re(\lambda)\int_{t_0}^ta(s)(1-\delta+\delta-b(s))ds}|x(t_0)|\nonumber\\
			&=M_2e^{-\Re(\lambda)\int_{t_0}^ta(s)(1-\delta)ds}|x(t_0)|,\nonumber
		\end{IEEEeqnarray}
		where  $M_2=\sup_{t\in(t_0,T)}e^{\Re(\lambda)\int_{t_0}^ta(s)(b(s)-\delta)ds}<\infty$.
	\end{itemize}
	
	Let $M=\max\{M_1,M_2\}$. Then it is proved that $x(t)\le M e^{-(1-\delta)\Re(\lambda)\int_{t_0}^ta(s)ds}|x(t_0)|=M e^{-(\Re(\lambda)-\varepsilon)\int_{t_0}^ta(s)ds}|x(t_0)|$.
\end{proof}

\begin{lemma}[Lemma 2 in \cite{Wang13CCDC}]
	Consider the following differential equation
	\begin{equation}\label{eq:5}
	\dot x(t)=-a(t)
	\begin{bmatrix}
	\lambda & 1 &  &  \\
	&  \ddots & \ddots &\\
	& &\ddots & 1\\
	& & & \lambda
	\end{bmatrix}_{r\times r}
	\!\!\!\!\!\!\!\!\!x(t),
	\end{equation}
	where $x(t)=(x_1(t),\cdots,x_r(t))^T\in{\mathbb C}^r$.
	The state transition matrix of \eqref{eq:5} is
	\begin{equation}\label{eq:6}
	\Phi_\lambda(t,t_0)=
	\begin{bmatrix}
	P_0^\lambda(t,t_0) & P_1^\lambda(t,t_0) & \cdots & P_{r-1}^\lambda(t,t_0)\\
	0 & P_0^\lambda(t,t_0) & \cdots & P_{r-2}^\lambda(t,t_0)\\
	\vdots & \vdots & \ddots & \vdots \\
	0 & 0 & \cdots& P_0^\lambda(t,t_0)
	\end{bmatrix}
	\end{equation}
	where $\lambda\in{\mathbb C}$, $P_0^\lambda(t,t_0)=e^{-\lambda\int_{t_0}^ta(\tau)d\tau}$ and  $P_i^\lambda(t,t_0)=-\int_{t_0}^ta(\tau)P_{i-1}^\lambda(\tau,t_0)P_0^\lambda(t,\tau)d\tau$, $i=1,2,\cdots,r-1$.
\end{lemma}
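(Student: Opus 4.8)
The plan is to verify directly that the proposed $\Phi_\lambda(t,t_0)$ in \eqref{eq:6} satisfies the two defining properties of the state transition matrix of \eqref{eq:5}, namely the initial condition $\Phi_\lambda(t_0,t_0)=I_r$ and the matrix differential equation $\frac{\partial}{\partial t}\Phi_\lambda(t,t_0)=-a(t)M\Phi_\lambda(t,t_0)$, where $M=\lambda I_r+N$ is the Jordan block appearing in \eqref{eq:5} and $N$ is the nilpotent matrix carrying ones on the first superdiagonal. Since the coefficient matrix $-a(t)M$ is (locally) integrable in $t$, the state transition matrix of \eqref{eq:5} exists and is unique, so establishing these two properties is sufficient. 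The initial condition is immediate: $P_0^\lambda(t_0,t_0)=e^{0}=1$, and for $i\ge 1$ the defining integral collapses so $P_i^\lambda(t_0,t_0)=0$; hence $\Phi_\lambda(t_0,t_0)=I_r$.

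For the differential equation I would first record the scalar identities obtained by differentiating the $P_i^\lambda$. Differentiating the exponential gives $\frac{\partial}{\partial t}P_0^\lambda(t,t_0)=-\lambda a(t)P_0^\lambda(t,t_0)$. For $i\ge 1$ the key step is to differentiate the recursive definition in the statement, in which $t$ appears both as the upper limit of integration and inside $P_0^\lambda(t,\tau)=e^{-\lambda\int_\tau^t a(s)ds}$. Applying the Leibniz rule, the boundary term evaluates to $-a(t)P_{i-1}^\lambda(t,t_0)P_0^\lambda(t,t)=-a(t)P_{i-1}^\lambda(t,t_0)$, while the interior contribution uses $\frac{\partial}{\partial t}P_0^\lambda(t,\tau)=-\lambda a(t)P_0^\lambda(t,\tau)$, which pulls out a factor $-\lambda a(t)$ and reconstitutes the original integrand, i.e.\ it equals $-\lambda a(t)P_i^\lambda(t,t_0)$. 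Combining these yields
\[
\frac{\partial}{\partial t}P_i^\lambda(t,t_0)=-\lambda a(t)P_i^\lambda(t,t_0)-a(t)P_{i-1}^\lambda(t,t_0),\quad i\ge 1.
\]

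It then remains to assemble these scalar identities into the matrix equation. Because $\Phi_\lambda$ is upper-triangular Toeplitz with $(j,k)$ entry equal to $P_{k-j}^\lambda(t,t_0)$ for $k\ge j$ and zero otherwise, and because $-a(t)M$ has diagonal entries $-\lambda a(t)$ and first-superdiagonal entries $-a(t)$, the $(j,k)$ entry of $-a(t)M\Phi_\lambda(t,t_0)$ is $-\lambda a(t)P_{k-j}^\lambda(t,t_0)-a(t)P_{k-j-1}^\lambda(t,t_0)$, with the convention $P_{-1}^\lambda\equiv0$ on the diagonal. Setting $i=k-j$, this matches the derivative formulas above for every $i\ge0$, so $\frac{\partial}{\partial t}\Phi_\lambda=-a(t)M\Phi_\lambda$, completing the verification. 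I expect the only delicate point to be the correct bookkeeping in the Leibniz differentiation of $P_i^\lambda$, since $t$ enters the integrand through $P_0^\lambda(t,\tau)$ in addition to the upper limit; everything else is mechanical. As an equivalent and perhaps more conceptual route, one may instead note that $M$ is time-invariant, whence $\Phi_\lambda(t,t_0)=\exp\!\big(-M\int_{t_0}^t a(s)ds\big)$, and expand this matrix exponential using the nilpotency of $N$; this produces the closed form $P_i^\lambda=e^{-\lambda\sigma}(-\sigma)^i/i!$ with $\sigma=\int_{t_0}^t a(s)ds$, which one checks satisfies the stated recursion.
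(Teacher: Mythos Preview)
Your verification is correct: checking the initial condition $\Phi_\lambda(t_0,t_0)=I_r$ and the Leibniz differentiation of the recursive formula for $P_i^\lambda$ yields exactly the componentwise identity needed to confirm $\partial_t\Phi_\lambda=-a(t)M\Phi_\lambda$, and uniqueness of the fundamental solution finishes the argument. Your alternative via the matrix exponential $\exp\!\big(-M\int_{t_0}^t a(s)\,ds\big)$ and the closed form $P_i^\lambda=e^{-\lambda\sigma}(-\sigma)^i/i!$ is also correct and indeed satisfies the stated recursion.

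Note, however, that the paper does not supply its own proof of this lemma: it is quoted verbatim as Lemma~2 of \cite{Wang13CCDC} and used as a black box. So there is no ``paper's proof'' to compare against; your self-contained derivation simply fills in what the authors import by citation.
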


\begin{lemma}\label{lem:1}
	For $\forall \varepsilon >0$, $\lambda\in{\mathbb C}$, $\Re(\lambda)>0$ there exists a positive constant  $M<\infty$ such that $\|\Phi_\lambda(t,t_0)\|_2\le M e^{-(\Re(\lambda)-\varepsilon) \int_{t_0}^ta(s)ds}$, where $\Phi_\lambda(t,t_0)$ is defined  by \eqref{eq:6}.
\end{lemma}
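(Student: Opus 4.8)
The plan is to prove Lemma \ref{lem:1} by exploiting the explicit upper-triangular structure of $\Phi_\lambda(t,t_0)$ in \eqref{eq:6}: since the $2$-norm of a matrix is controlled by the magnitudes of its entries (e.g. $\|X\|_2 \le r\max_{i,j}|X_{ij}|$ for an $r\times r$ matrix), it suffices to bound each $P_i^\lambda(t,t_0)$, $i=0,1,\dots,r-1$, by a constant multiple of $e^{-(\Re(\lambda)-\varepsilon)\int_{t_0}^t a(s)ds}$. I would first fix $\varepsilon > 0$; without loss of generality take $\varepsilon < \Re(\lambda)$ (otherwise the bound is trivial since $\|\Phi_\lambda\|_2$ is bounded once we control the entries). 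Then I would run an induction on $i$.

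For the base case $i=0$, note $P_0^\lambda(t,t_0) = e^{-\lambda\int_{t_0}^t a(\tau)d\tau}$, so $|P_0^\lambda(t,t_0)| = e^{-\Re(\lambda)\int_{t_0}^t a(\tau)d\tau} \le e^{-(\Re(\lambda)-\varepsilon)\int_{t_0}^t a(\tau)d\tau}$ since $a(\cdot)\ge 0$ and $\varepsilon > 0$; this is exactly the statement of Lemma \ref{lem:3} (or a trivial special case of it). For the inductive step, suppose $|P_{i-1}^\lambda(t,t_0)| \le M_{i-1} e^{-(\Re(\lambda)-\varepsilon/2)\int_{t_0}^t a(s)ds}$ — I would actually carry a slightly stronger exponent, say $\Re(\lambda) - \varepsilon/2$ or $\Re(\lambda)-\varepsilon\cdot(1 - 2^{-i})$, through the induction so that the convolution integral produces only a polynomial-in-$\int a$ factor that can then be absorbed. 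Plugging the recursion $P_i^\lambda(t,t_0) = -\int_{t_0}^t a(\tau) P_{i-1}^\lambda(\tau,t_0) P_0^\lambda(t,\tau)\,d\tau$ and using the semigroup-type identity $e^{-c\int_{t_0}^\tau a}\cdot e^{-c'\int_\tau^t a}$, the integrand becomes $a(\tau)$ times an exponential of $\int_{t_0}^t a$ plus a correction; choosing the intermediate decay rates so the exponents match, the integral $\int_{t_0}^t a(\tau)\,d\tau$ comes out as a prefactor, i.e. one gets a bound like $M_{i-1}\big(\int_{t_0}^t a(s)ds\big) e^{-(\Re(\lambda)-\varepsilon')\int_{t_0}^t a(s)ds}$. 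Finally, since $x e^{-\gamma x}$ is bounded on $[0,\infty)$ for any $\gamma > 0$, the factor $\int_{t_0}^t a(s)ds$ times the extra exponential margin $e^{-(\varepsilon - \varepsilon')\int_{t_0}^t a(s)ds}$ is bounded by a constant, giving $|P_i^\lambda(t,t_0)| \le M_i e^{-(\Re(\lambda)-\varepsilon)\int_{t_0}^t a(s)ds}$.

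Collecting the bounds, with $M = r\max_{0\le i\le r-1} M_i$ we obtain $\|\Phi_\lambda(t,t_0)\|_2 \le M e^{-(\Re(\lambda)-\varepsilon)\int_{t_0}^t a(s)ds}$, which is the claim. The main obstacle is the bookkeeping in the inductive step: one must carefully track how much of the $\varepsilon$-margin is consumed at each level of the recursion (there are $r-1$ levels, each producing one extra power of $\int a$), and split $\varepsilon$ into finitely many pieces so that every polynomial factor $\big(\int_{t_0}^t a\big)^k$ is absorbed by a genuine exponential decay $e^{-\delta\int_{t_0}^t a}$ with $\delta > 0$. A clean way to organize this is to prove by induction the stronger statement that for each $i$ and each $\varepsilon>0$ there is $M_{i,\varepsilon}<\infty$ with $|P_i^\lambda(t,t_0)|\le M_{i,\varepsilon}e^{-(\Re(\lambda)-\varepsilon)\int_{t_0}^t a(s)ds}$, so that at step $i$ one may freely apply the step-$(i-1)$ bound with a smaller $\varepsilon$ and still have room to spare; the boundedness of $x^k e^{-\gamma x}$ on $[0,\infty)$ then does all the remaining work. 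The nonnegativity hypothesis $a(t)\ge 0$ is used throughout to ensure $\int_{t_0}^t a$ is nondecreasing and nonnegative, so these elementary estimates are valid.
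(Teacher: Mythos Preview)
Your proposal is correct and follows essentially the same route as the paper's proof: induction on $i$ to bound each $|P_i^\lambda(t,t_0)|$ by a constant times $e^{-(\Re(\lambda)-\varepsilon)\int_{t_0}^t a}$, using the recursion to pick up a factor $\int_{t_0}^t a(s)\,ds$ at each step and then absorbing that factor with the elementary bound $\sup_{x\ge 0} x e^{-\gamma x}<\infty$. The paper makes the bookkeeping explicit by allotting exactly $\varepsilon/r$ of the margin per inductive step (so that $|P_i^\lambda|\le M_i e^{-(\Re(\lambda)-\frac{i}{r}\varepsilon)\int_{t_0}^t a}$) and giving the closed-form constant $M_{i+1}=M_i r/(\varepsilon e)$, whereas you leave the splitting scheme flexible; this is a cosmetic difference only.
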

\begin{proof}
	It is easy to see that $|P_0^\lambda(t,t_0)|=e^{-\Re(\lambda)\int_{t_0}^ta(s)ds}$.
	Assume that $|P_i^\lambda(t,t_0)|\le M_i e^{-(\Re(\lambda)-\frac{i}{r}\varepsilon)\int_{t_0}^ta(s)ds}$, where $M_i$ is a finite positive constant.
	Then
    \begin{multline*}
      |P_{i+1}^\lambda(t,t_0)|\le\int_{t_0}^ta(\tau)|P_{i}^\lambda(\tau,t_0)||P_0^\lambda(t,\tau)|d\tau\\
      \le M_i e^{-(\Re(\lambda)-\frac{i}{r}\varepsilon)\int_{t_0}^ta(s)ds}\int_{t_0}^ta(\tau)d\tau.
    \end{multline*}

	Let $M_{i+1}={M_ir}\big/{\varepsilon e}$, where $e$ is the Euler's number. Then it is calculated that
	\begin{IEEEeqnarray}{ll}
		&|P_{i+1}^\lambda(t,t_0)|-M_{i+1}e^{-(\Re(\lambda)-\frac{i+1}{r}\varepsilon)\int_{t_0}^ta(s)ds}\nonumber\\&\le \frac{M_i\int_{t_0}^ta(\tau)d\tau}{e^{(\Re(\lambda)-\frac{i}{r}\varepsilon)\int_{t_0}^ta(s)ds}}-\frac{M_{i+1}e^{\frac{1}{r}\varepsilon\int_{t_0}^ta(s)ds}}{e^{(\Re(\lambda)-\frac{i}{r}\varepsilon)\int_{t_0}^ta(s)ds}}\nonumber\\
		&=\frac{M_i\int_{t_0}^ta(\tau)d\tau-M_{i+1}e^{\frac{1}{r}\varepsilon\int_{t_0}^ta(s)ds}}{e^{(\Re(\lambda)-\frac{i}{r}\varepsilon)\int_{t_0}^ta(s)ds}}.\nonumber
	\end{IEEEeqnarray}
	Define a function $f(\xi)=M_i\xi-M_{i+1}e^{\frac{\varepsilon}{r}\xi}$. It is easy to see that $\max_{\xi\ge0}(f(\xi))=f(\frac{r}{\varepsilon})=0$.
	Therefore, $|P_{i+1}^\lambda(t,t_0)|\le M_{i+1}e^{-(\Re(\lambda)-\frac{i+1}{r}\varepsilon)\int_{t_0}^ta(s)ds}$.
	By the mathematical induction, it can be proved that $|P_i(t,t_0)|\le \bar Me^{-(\Re(\lambda)-\varepsilon)\int_{t_0}^ta(s)ds},\; i=1,\cdots,r$,
	where $\bar M=\max\{1,M_1,\cdots,M_{r-1}\}$.
	Hence, there must exist a finite positive constant $M$ such that $\|\Phi_\lambda(t,t_0)\|_1\le M e^{(\Re(\lambda)-\varepsilon) \int_{t_0}^ta(s)ds}$.
\end{proof}

\begin{lemma}\label{lem:2}
	Consider the following differential equation
	\begin{equation}\label{eq:7}
	\dot x(t)=-a(t)Ax(t),
	\end{equation}
	where $a(t)\ge0$ and $x(t)\in{\mathbb R}^n$.
	The solution to this differential equation is $x(t)=\Phi(t,t_0)x(t_0)$, where $\Phi(t,t_0)$ is the state transition matrix.
	If all eigenvalues $\{\lambda_1,\cdots,\lambda_n\}$ of $A$ have positive real parts, then for $\forall \varepsilon>0$, there exists a finite positive constant $M$ such that $\|\Phi(t,t_0)\|_2\le Me^{-(\lambda_{\min}-\varepsilon)\int_{t_0}^ta(s)ds}$,
	where $\lambda_{\min}=\min\{\Re(\lambda_i)|i=1,\cdots,n\}>0$.
\end{lemma}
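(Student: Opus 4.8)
The plan is to reduce the matrix equation \eqref{eq:7} to the Jordan-block form already treated in \eqref{eq:5}--\eqref{eq:6} and then apply Lemma \ref{lem:1} blockwise. First I would note that the matrices $-a(t)A$ commute for different values of $t$, since they are all scalar multiples of the fixed matrix $A$; hence the state transition matrix is explicitly $\Phi(t,t_0)=\exp\big(-A\int_{t_0}^t a(s)\,ds\big)$, and writing $\tau(t)=\int_{t_0}^t a(s)\,ds\ge 0$ we have $\Phi(t,t_0)=e^{-A\tau(t)}$. Next, put $A$ into Jordan canonical form: there is an invertible (in general complex) matrix $P$ with $P^{-1}AP=J=\mathrm{blockdiag}(J_1,\ldots,J_p)$, where each $J_k$ is a Jordan block of size $r_k$ with eigenvalue $\lambda_{i_k}$ on the diagonal and $1$'s on the superdiagonal, i.e., exactly the matrix appearing in \eqref{eq:5}. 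Consequently $\Phi(t,t_0)=P\,\mathrm{blockdiag}\big(\Phi_{\lambda_{i_1}}(t,t_0),\ldots,\Phi_{\lambda_{i_p}}(t,t_0)\big)\,P^{-1}$, where each $\Phi_{\lambda_{i_k}}$ is the state transition matrix given by \eqref{eq:6}.

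The quantitative estimate then follows from Lemma \ref{lem:1}: for the given $\varepsilon>0$ and each $k$ there is a finite $M_k>0$ with $\|\Phi_{\lambda_{i_k}}(t,t_0)\|_2\le M_k e^{-(\Re(\lambda_{i_k})-\varepsilon)\tau(t)}$. Since $\Re(\lambda_{i_k})\ge\lambda_{\min}$ and $\tau(t)\ge 0$, this is bounded further by $M_k e^{-(\lambda_{\min}-\varepsilon)\tau(t)}$. Because the 2-norm of a block-diagonal matrix equals the maximum of the 2-norms of its diagonal blocks, the middle factor is bounded by $(\max_k M_k)\,e^{-(\lambda_{\min}-\varepsilon)\tau(t)}$, and submultiplicativity of the spectral norm gives $\|\Phi(t,t_0)\|_2\le \|P\|_2\|P^{-1}\|_2(\max_k M_k)\,e^{-(\lambda_{\min}-\varepsilon)\int_{t_0}^t a(s)\,ds}$. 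Taking $M=\|P\|_2\|P^{-1}\|_2\max_k M_k$ completes the argument.

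There is no deep obstacle here; the only points needing care are (i) justifying the commuting/block-diagonal structure of $\Phi$ so that Lemma \ref{lem:1} really applies block by block, and (ii) absorbing the condition number $\kappa_2(P)=\|P\|_2\|P^{-1}\|_2$ of the (possibly complex) Jordan change of basis into $M$, observing that it depends on neither $t$ nor $t_0$ (and that $\Phi(t,t_0)$ itself remains real even though $P$ need not be). If one wishes to bypass the Jordan form altogether, the same conclusion is immediate from the standard bound $\|e^{-A\tau}\|_2=\mathcal{O}(e^{-(\lambda_{\min}-\varepsilon)\tau})$ as $\tau\to\infty$ combined with boundedness on compact $\tau$-intervals; but routing through Lemma \ref{lem:1} keeps the proof self-contained within this note.
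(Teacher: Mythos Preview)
Your proposal is correct and follows essentially the same route as the paper's own proof: reduce $A$ to Jordan canonical form, apply Lemma~\ref{lem:1} to each Jordan block, and absorb the condition number of the similarity transformation into the constant $M$. The only differences are cosmetic---you make the commuting structure and the block-diagonal norm identity explicit, and you add the (correct but optional) remark about the alternative via the standard matrix-exponential bound.
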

\begin{proof}
	There exists a transformation matrix $T$ such that $T^{-1}AT=\Lambda=\diag(\Lambda_1,\cdots,\Lambda_s)$, where $\Lambda_i\in{\mathbb C}^{r_i\times r_i}$ ($r_i\in{\mathbb N}^+$ and $\sum_{i=1}^sr_i=n$) is the Jordan block with the diagonal elements being $\lambda_i$.

	The state transition matrix $\Phi(t,t_0)$ can therefore be written in the following form
	\begin{equation}\label{eq:29}
	\Phi(t,t_0)=T^{-1}\diag(\Phi_{\lambda_1}(t,t_0),\cdots,\Phi_{\lambda_s}(t,t_0))T,
	\end{equation}
	where $\Phi_{\lambda_i}(t,t_0)$ is defined by \eqref{eq:6}.
	By Lemma \ref{lem:1}, there exists a finite positive constant $M_i$ such that
    \begin{equation*}
      \|\Phi_{\lambda_i}(t,t_0)\|_2\le M_i e^{-(\Re(\lambda_i)-\varepsilon) \int_{t_0}^ta(s)ds}\le M_i e^{-(\lambda_{\min}-\varepsilon) \int_{t_0}^ta(s)ds}.
    \end{equation*}	
	Therefore, $\|\Phi(t,t_0)\|_2\le Me^{-(\lambda_{\min}-\varepsilon) \int_{t_0}^ta(s)ds}$, where $M=\|T\|_2\|T^{-1}\|_2\max_i\{M_i\}$.
\end{proof}

\begin{lemma}\label{lem:4}
	Consider the following differential equation
	\begin{equation}\label{eq:2}
	\dot x(t)=-a(t)(A-B(t))x(t),
	\end{equation}
	where $a(t)\ge0$, $x(t)\in{\mathbb R}^n$ and $\lim_{t\to \infty}B(t)$ is a zero matrix.
	If all eigenvalues of $A$ have positive real parts, then for $\forall \varepsilon>0$, there exists a finite positive constant $M$ such that
	$\|\Psi(t,t_0)\|_2\le M_1e^{-(\lambda_{\min}-\varepsilon)\int_0^ta(s)ds}$,
	where $\Psi(t,t_0)$ is the state matrix of \eqref{eq:2} and $\lambda_{\min}$ is defined in Lemma \ref{lem:2}.
\end{lemma}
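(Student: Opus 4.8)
The plan is to regard \eqref{eq:2} as a vanishing perturbation of the equation $\dot x(t)=-a(t)Ax(t)$ treated in Lemma~\ref{lem:2}, and to transfer the exponential bound from the unperturbed transition matrix $\Phi(t,t_0)$ to $\Psi(t,t_0)$ by a Gr\"onwall argument that exploits $\lim_{t\to\infty}B(t)=0$. First I would fix an auxiliary parameter $\varepsilon'\in(0,\varepsilon)$ and apply Lemma~\ref{lem:2} to obtain a finite constant $M=M(\varepsilon')$ with $\|\Phi(t,t_0)\|_2\le M e^{-(\lambda_{\min}-\varepsilon')\int_{t_0}^t a(s)ds}$ for all $t\ge t_0$. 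Using the variation-of-constants identity
\[
\Psi(t,t_0)=\Phi(t,t_0)+\int_{t_0}^t \Phi(t,s)\,a(s)B(s)\,\Psi(s,t_0)\,ds,
\]
I would set $g(t)=\|\Psi(t,t_0)\|_2\,e^{(\lambda_{\min}-\varepsilon')\int_{t_0}^t a(s)ds}$ and, after inserting the bound on $\|\Phi\|_2$, arrive at the scalar inequality $g(t)\le M+M\int_{t_0}^t a(s)\|B(s)\|_2\,g(s)\,ds$.

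The key step is to keep the perturbation term under control even though $\int_0^\infty a(s)ds=\infty$ forbids a bound of the form $\int a(s)\|B(s)\|_2 ds<\infty$ in general. Since $B(t)\to 0$, there is a finite $T$ with $\|B(s)\|_2<\delta$ for all $s\ge T$, where $\delta>0$ is a small number still to be chosen; the point is that $M$ has already been pinned down by the earlier choice of $\varepsilon'$, so we may now take $\delta<(\varepsilon-\varepsilon')/M$ without circularity. For $t_0\ge T$ the inequality reduces to $g(t)\le M+M\delta\int_{t_0}^t a(s)g(s)\,ds$, and Gr\"onwall's lemma gives $g(t)\le M\exp\!\big(M\delta\int_{t_0}^t a(s)ds\big)$, i.e. $\|\Psi(t,t_0)\|_2\le M e^{-(\lambda_{\min}-\varepsilon'-M\delta)\int_{t_0}^t a(s)ds}\le M e^{-(\lambda_{\min}-\varepsilon)\int_{t_0}^t a(s)ds}$.

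It then remains to remove the restriction $t_0\ge T$. For $t_0<T\le t$ I would factor $\Psi(t,t_0)=\Psi(t,T)\Psi(T,t_0)$, bound $\|\Psi(t,T)\|_2$ by the case just established, and bound the transient factor over the fixed interval $[t_0,T]\subseteq[0,T]$ by the crude estimate $\|\Psi(T,t_0)\|_2\le\exp\!\big(\int_0^T a(s)\|A-B(s)\|_2\,ds\big)$, which is finite and uniform in $t_0$; the remaining case $t<T$ is the same bounded-interval estimate. Absorbing these finite factors into the constant yields the asserted bound $\|\Psi(t,t_0)\|_2\le M_1 e^{-(\lambda_{\min}-\varepsilon)\int_{t_0}^t a(s)ds}$.

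I expect the main obstacle to be precisely the order of the parameter choices and the uniformity in $t_0$: one must select $\varepsilon'$ (hence $M$ from Lemma~\ref{lem:2}) \emph{before} choosing $\delta$, and one must verify that the transient window $[t_0,T]$ contributes only a bounded multiplicative constant regardless of $t_0$. An alternative route that avoids the Gr\"onwall step entirely is to transform $A$ into Jordan form with the superdiagonal entries rescaled to be smaller than $\delta$, say $T^{-1}AT=\Lambda_\delta$, and to use the quadratic Lyapunov function $V(t)=\|T^{-1}x(t)\|_2^2$, whose derivative along \eqref{eq:2} satisfies $\dot V(t)\le -2a(t)\big(\lambda_{\min}-\delta-\|T^{-1}B(t)T\|_2\big)V(t)$; choosing $\delta$ small and $t$ large enough that $\|T^{-1}B(t)T\|_2$ is small then gives the decay rate directly, with the transient interval handled as above.
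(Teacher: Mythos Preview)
Your argument is correct and follows the same overall strategy as the paper: write $\Psi(t,t_0)$ via variation of constants in terms of the unperturbed transition matrix $\Phi(t,t_0)$, invoke Lemma~\ref{lem:2} with a margin (the paper uses $\varepsilon/2$, you use $\varepsilon'$), and then absorb the vanishing perturbation $a(t)B(t)$. The difference is only in how the last step is executed. After reaching the integral inequality for $\|x(t)\|_2$, the paper compares it to the scalar equation $\dot y(t)=-(\lambda_{\min}-\varepsilon/2)a(t)y(t)+M_1a(t)\|B(t)\|_2\,y(t)$ and then appeals to Lemma~\ref{lem:3}, which internally performs exactly the transient/tail split you carry out by hand. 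Your direct Gr\"onwall argument with the explicit choice $\delta<(\varepsilon-\varepsilon')/M$, followed by the factorization $\Psi(t,t_0)=\Psi(t,T)\Psi(T,t_0)$ for $t_0<T$, is a clean self-contained substitute that bypasses Lemma~\ref{lem:3} entirely; the paper's route has the advantage of reusing that lemma. You have also correctly isolated the one genuine subtlety, namely fixing $\varepsilon'$ (hence $M$) \emph{before} choosing $\delta$, and ensuring the compact-interval factor is uniform in $t_0$. The Lyapunov alternative you sketch at the end is a genuinely different approach, not used in the paper, and would also work once one handles the complex Jordan blocks appropriately.
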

\begin{proof}
	The solution to \eqref{eq:2} can be written as
	$x(t)=\Phi(t,t_0)x(t_0)+\int_{t_0}^ta(s)\Phi(t,s)B(s)x(s)d(s)$,
	where $\Phi(t,t_0)$ is the state transition matrix of \eqref{eq:7}.
	
	By Lemma \ref{lem:2}, there exists a finite positive constant $M_1>1$ such that $\|\Phi(t,t_0)\|\le M_1e^{-(\lambda_{\min}-\varepsilon/2)\int_{t_0}^ta(s)ds}$.
	Hence,
    \begin{IEEEeqnarray}{rl}
    &\|x(t)\|_2\le\|\Phi(t,t_0)\|_2\|x(t_0)\|_2\nonumber\\
    &\quad+\int_{t_0}^ta(s)\|\Phi(t,s)\|_2\|B(s)\|_2\|x(s)\|_2d(s)\nonumber\\
	&\le M_1e^{-(\lambda_{\min}-\varepsilon/2)\int_{t_0}^ta(s)ds} \|x(t_0)\|_2\nonumber\\
    &\quad+\int_{t_0}^ta(s)e^{-(\lambda_{\min}-\varepsilon/2)\int_{s}^ta(\tau)d\tau}M_1\|B(s)\|_2\|x(s)\|_2ds.\nonumber
    \end{IEEEeqnarray}

	Consider another differential equation
	\begin{equation}\label{eq:8}
	\dot y(t)=-(\lambda_{\min}-\varepsilon/2)a(t)y(t)+a(t)M_1\|B(t)\|_2y(t),
	\end{equation}
	where $y(t) \in \mathbb{R}$. By Lemma \ref{lem:3}, there must exist a finite positive constant $M_2$ such that $\forall t>t_0$,
	\begin{equation}\label{eq:36}
	|y(t)|\le M_2e^{-(\lambda_{\min}-\varepsilon)\int_{t_0}^ta(s)ds}|y(t_0)|.
	\end{equation}
	The solution to \eqref{eq:8} is
	$y(t)=e^{-(\lambda_{\min}-\varepsilon/2)\int_{t_0}^ta(s)ds}y(t_0)+\int_{t_0}^ta(s)e^{-(\lambda_{\min}-\varepsilon/2)\int_{s}^ta(\tau)d\tau}M_1\|B(s)\|_2y(s)ds$.
    Therefore, if $y(t_0)=\|x(t_0)\|_2$, then for $\forall t\ge t_0$, $\|\Psi(t,t_0)x(t_0)\|_2=\|x(t)\|_2\le M_1y(t)\le M_1M_2e^{-(\lambda_{\min}-\varepsilon)\int_{t_0}^ta(s)ds}\|x(t_0)\|_2$.	
	By the arbitrariness of $x(t_0)$, there must exist a positive constant $M<\infty$ such that $\forall t>t_0$, $\|\Psi(t,t_0)\|_2\le Me^{-(\lambda_{\min}-\varepsilon)\int_{t_0}^ta(s)ds}$.
\end{proof}


\begin{lemma}\label{lem:7}
	Consider the following stochastic differential equation
	\begin{equation}\label{eq:37}
	\xi^{(n)}+b_{n-1}\xi^{(n-1)}\xi^{(n-1)}+\cdots+b_1\dot \xi(t)+b_0 \xi(t)=\zeta(t),
	\end{equation}
	where $\zeta(t)$ is a mean square continuous random process.
	It is assumed that $\zeta(t)$ is convergent to a random vector $\zeta^*$ in mean square, where $E\|\zeta^*\|_2^2<\infty$.
	If all roots of polynomial $s^n+b_{n-1}s^{n-1}+\cdots+b_0=0$ have negative real parts, then we have
	\begin{description}
		\item[\textbf{(I)}] $\lim_{t\to\infty}E\|\xi(t)-\zeta^*/b_0\|_2^2=0$ and $\lim_{t\to\infty}\|\xi^{(i)}(t)\|_2^2=0$, $i=1,\cdots,n$.
		\item[\textbf{(II)}] If $E(\zeta(t)-\zeta^*)=\mathcal{O}(e^{-\mu t^{\beta}})$ where $\mu>0$ and $\beta\in(0,1)$, then $E(\xi(t)-\zeta^*/b_0)=\mathcal{O}(e^{-\mu t^{\beta}})$ and $E(\xi^{(i)}(t))=\mathcal{O}(e^{-\mu t^{\beta}})$, $i=1,\cdots,n$.
		\item[\textbf{(III)}] If $E|\zeta(t)-\zeta^*|^2=\mathcal{O}(t^{-\beta})$ where $\beta\in(0,1)$, then $E|\xi(t)-\zeta^*/b_0|^2=\mathcal{O}(t^{-\beta})$ and $E|\xi^{(i)}(t)|^2=\mathcal{O}(t^{-\beta})$, $i=1,\cdots,n$.
	\end{description}
\end{lemma}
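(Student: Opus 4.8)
The plan is to recast the $n$-th order equation \eqref{eq:37} as a first-order linear system driven by $\zeta$, to exploit that its system matrix is Hurwitz, and then to reduce all three parts to two elementary asymptotic integral estimates. First I would put $z(t)=\big(\xi(t)^{T},\dot\xi(t)^{T},\dots,(\xi^{(n-1)}(t))^{T}\big)^{T}$ and let $A_c$ be the companion matrix whose last row is $(-b_0,-b_1,\dots,-b_{n-1})$, so that \eqref{eq:37} becomes $\dot z(t)=(A_c\otimes I)z(t)+(e_n\otimes I)\zeta(t)$, $e_n$ being the $n$-th standard basis vector. The hypothesis that $s^{n}+b_{n-1}s^{n-1}+\cdots+b_0$ has only roots with negative real parts is exactly that $A_c$, hence $A_c\otimes I$, is Hurwitz, so there are constants $M\ge1$ and $\kappa>0$ with $\|e^{(A_c\otimes I)t}\|_2\le Me^{-\kappa t}$ for all $t\ge0$. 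A direct check shows $z^{*}:=\big((\zeta^{*}/b_0)^{T},0^{T},\dots,0^{T}\big)^{T}$ satisfies $(A_c\otimes I)z^{*}+(e_n\otimes I)\zeta^{*}=0$ (the first $n-1$ block rows vanish and the last reads $-b_0(\zeta^{*}/b_0)+\zeta^{*}=0$). Hence $\tilde z:=z-z^{*}$ obeys $\dot{\tilde z}=(A_c\otimes I)\tilde z+(e_n\otimes I)\nu$ with $\nu(t):=\zeta(t)-\zeta^{*}$, and by the variation-of-constants formula (valid since $\nu$ is mean-square continuous)
\[
\tilde z(t)=e^{(A_c\otimes I)(t-t_0)}\tilde z(t_0)+\int_{t_0}^{t}e^{(A_c\otimes I)(t-s)}(e_n\otimes I)\,\nu(s)\,ds,
\]
where $E\|\tilde z(t_0)\|_2^{2}<\infty$ (finite second moment of the initial data).

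Next I would establish the two estimates (for $\kappa>0$, $\beta\in(0,1)$): (a) $\int_{t_0}^{t}e^{-\kappa(t-s)}e^{-\mu s^{\beta}}\,ds=\mathcal{O}(e^{-\mu t^{\beta}})$ for every $\mu>0$; and (b) $\int_{t_0}^{t}e^{-\kappa(t-s)}s^{-\beta}\,ds=\mathcal{O}(t^{-\beta})$. For (a), substitute $u=t-s$ and use the concavity of $x\mapsto x^{\beta}$, which gives the chord bound $(t-u)^{\beta}\ge t^{\beta}(1-u/t)$ on $[0,t]$; then $\int_{0}^{t-t_0}e^{-\kappa u}e^{-\mu(t-u)^{\beta}}\,du\le e^{-\mu t^{\beta}}\int_{0}^{\infty}e^{-(\kappa-\mu t^{\beta-1})u}\,du\le\tfrac{2}{\kappa}e^{-\mu t^{\beta}}$ once $t$ is large enough that $\mu t^{\beta-1}\le\kappa/2$, which is possible precisely because $\beta<1$. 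For (b), split the integral at $t/2$: on $[t_0,t/2]$ bound $s^{-\beta}\le(t/2)^{-\beta}$ and $\int e^{-\kappa(t-s)}\,ds\le1/\kappa$; on $[t/2,t]$ bound $e^{-\kappa(t-s)}\le e^{-\kappa t/2}$ and $\int_{t/2}^{t}s^{-\beta}\,ds\le t^{1-\beta}/(1-\beta)$, whose product decays exponentially and so is $o(t^{-\beta})$. Estimate (b) is exactly "the same procedure of Lemma A.2 in \cite{Tang14CDC}" already quoted in the proof of Theorem~\ref{thm:1}.

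From here the three parts follow. For (II): take expectations in the variation-of-constants formula, giving $\|E\tilde z(t)\|_2\le Me^{-\kappa(t-t_0)}\|E\tilde z(t_0)\|_2+M\int_{t_0}^{t}e^{-\kappa(t-s)}\|E\nu(s)\|_2\,ds$; the first term is $o(e^{-\mu t^{\beta}})$, and (a) applied to the second (using $\|E\nu(s)\|_2=\mathcal{O}(e^{-\mu s^{\beta}})$) yields $\|E\tilde z(t)\|_2=\mathcal{O}(e^{-\mu t^{\beta}})$, i.e. the claim for $\xi$ and for $\xi^{(i)}$, $i=1,\dots,n-1$. For (III): take norms pathwise, square using $(p+q)^{2}\le2p^{2}+2q^{2}$ together with the weighted Cauchy--Schwarz inequality $\big(\int_{t_0}^{t}e^{-\kappa(t-s)}\|\nu(s)\|_2\,ds\big)^{2}\le\tfrac1\kappa\int_{t_0}^{t}e^{-\kappa(t-s)}\|\nu(s)\|_2^{2}\,ds$, take expectations, and apply (b) with $E\|\nu(s)\|_2^{2}=\mathcal{O}(s^{-\beta})$ to get $E\|\tilde z(t)\|_2^{2}=\mathcal{O}(t^{-\beta})$. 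For (I): the same squared inequality with $E\|\nu(s)\|_2^{2}\to0$ (splitting the integral into a small tail plus an exponentially vanishing front part) gives $E\|\tilde z(t)\|_2^{2}\to0$. In every case the bound on $\xi^{(n)}$ is read off separately from the ODE by writing $\xi^{(n)}=\nu-b_{n-1}\xi^{(n-1)}-\cdots-b_1\dot\xi-b_0(\xi-\zeta^{*}/b_0)$ and applying the bounds just obtained to each summand (with $\big(\sum_{k}a_k\big)^{2}\le(n+1)\sum_{k}a_k^{2}$ for part (III)).

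The step I expect to be the main obstacle is estimate (a): one must extract the sharp decay rate $e^{-\mu t^{\beta}}$, with the \emph{same} $\mu$ and $\beta$, from the convolution against the exponential kernel, rather than a degraded rate such as $e^{-\mu(t/2)^{\beta}}$. The concavity/chord bound combined with $\mu t^{\beta-1}\to0$ — which is exactly where $\beta\in(0,1)$ enters — is what makes this work. Once (a) and (b) are in place the remainder is routine bookkeeping with the variation-of-constants formula and Cauchy--Schwarz, plus the observation that the companion form turns the scalar/vector conclusions for $\xi$ directly into statements about $\tilde z$.
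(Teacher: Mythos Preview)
Your proposal is correct and complete, but it takes a genuinely different route from the paper. The paper factors the differential operator as $\prod_{i=1}^{n}(\frak{D}-r_i)$ and introduces the cascade $x_i(t)=\prod_{j=i+1}^{n}(\frak{D}-r_j)\xi(t)$, so that each $x_i$ satisfies a scalar first-order equation driven by $x_{i-1}$; the required bounds are then proved step by step along this chain, and the conclusions for $\xi^{(i)}$ are recovered at the end by undoing the triangular change of variables. You instead pass to the companion matrix and treat all components of $\tilde z$ simultaneously via one matrix exponential; this is more compact and delivers the bounds for $\xi,\dot\xi,\dots,\xi^{(n-1)}$ in one stroke, at the (mild) cost of invoking the Hurwitz bound $\|e^{(A_c\otimes I)t}\|_2\le Me^{-\kappa t}$. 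For the convolution estimates the paper uses L'H\^opital's rule to evaluate $\lim_{t\to\infty}\big(\int_{t_0}^{t}e^{-\kappa(t-s)}e^{-\mu s^{\beta}}ds\big)/e^{-\mu t^{\beta}}$ and the analogous limit with $s^{-\beta/2}$, whereas you use the concavity/chord bound for (a) and a splitting for (b); both arguments are equally short. In part~(III) the paper controls the squared integral via the mean-square inequality $E\big|\int\!f\big|^2\le\big(\int E^{1/2}|f|^2\big)^2$, which leads to an integrand $s^{-\beta/2}$, while your weighted Cauchy--Schwarz leads directly to the integrand $s^{-\beta}$; either works.

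One slip to fix in your sketch of (b): the two interval bounds are interchanged. On $[t_0,t/2]$ one has $t-s\ge t/2$, hence $e^{-\kappa(t-s)}\le e^{-\kappa t/2}$, and pairs with $\int_{t_0}^{t/2}s^{-\beta}ds\le t^{1-\beta}/(1-\beta)$; on $[t/2,t]$ one has $s\ge t/2$, hence $s^{-\beta}\le(t/2)^{-\beta}$, and pairs with $\int_{t/2}^{t}e^{-\kappa(t-s)}ds\le 1/\kappa$. With this correction the estimate and the rest of your argument go through unchanged.
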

\begin{proof}	
	\textbf{(I)} see the proof of Lemma 2 in \cite{Wang15TAC}.
	
	\textbf{(II)}
	Let $\D$ denote the differential operator, namely $\D^i\xi(t)=\xi^{(i)}(t)$.
	Let $\{r_1,\cdots,r_n\}$ denote the roots of polynomial $s^n+b_{n-1}s^{n-1}+\cdots+b_0=0$, where $\Re(r_i)<0,\ i=1,\cdots,n$.
	Then the stochastic differential equation \eqref{eq:37} can be rewritten as
	$\prod_{i=1}^n(\D-r_i)\xi(t)=\zeta(t)$. Let $x_i(t)=\prod_{j=i+1}^n(\D-r_j)\xi(t)$ ($i=1,\cdots,n-1$) and $x_n(t)=\xi(t)$.
	
	{\textbf{Part I}}: By the definition of $x_1(t)$, it can be obtained that there must exist $n-2$ constants $k_1,\cdots,k_{n-2}$ such that $x_1(t)=\D^{n-1}\xi(t)+k_{n-2}\D^{n-2}\xi(t)+\cdots+k_1\D\xi(t)+\prod_{i=2}^n(-r_i)\xi(t)$.	
	Therefore, according to (I), it can be obtained that $x_1(t)$ is convergent to $\prod_{i=2}^n(-r_i)\zeta^*/b_0=-\zeta^*/{r_1}$ in mean square.
	Furthermore, $\dot x_1(t)=r_1x_1(t)+\zeta(t)$.
	Then,	
	\begin{multline}
		x_1(t)=e^{r_1(t-t_0)}x_1(t_0)+\int_{t_0}^t\!\!\!\!e^{r_1(t-s)}((\zeta(s)-\zeta^*)+\zeta^*)ds,\label{eq:16}
	\end{multline}
	which follows that $E\{x_1(t)\}=R_1(t,t_0)+R_2(t,t_0)+R_3(t,t_0)$, where $R_1(t,t_0)=e^{r_1(t-t_0)}x_1(t_0)$, $R_2(t,t_0)=\int_{t_0}^te^{r_1(t-s)}E(\zeta(s)-\zeta^*)ds$ and $R_3(t,t_0)=\int_{t_0}^te^{r_1(t-s)}E(\zeta^*)ds$.
	It is easy to see that
	\begin{equation}\label{eq:12}
	R_1(t,t_0)=\mathcal{O}(e^{r_1t}).
	\end{equation}
	
	Since $E(\zeta(t)-\zeta^*)=\mathcal{O}(e^{-\mu t^{\beta}})$, there must exist a finite positive constant  $M$ such that  $|E(\zeta(t)-\zeta^*)|\le Me^{-\mu t^{\beta}}$.
	Therefore,
	\begin{equation}\label{eq:13}
	|R_2(t,t_0)|\le M\int_{t_0}^te^{\Re(r_1)(t-s)}e^{-\mu s^{\beta}}ds.
	\end{equation}
	
	It is easy to see that $\lim_{t\to\infty}e^{-\Re(r_1)t-\mu t^{\beta}}=\infty$ and $\int_{t_0}^\infty e^{-\Re(r_1)s-\mu s^{\beta}}ds=\infty$.
	Therefore, by L'H\^ospital's rule, it is obtained that
    \begin{multline*}
    \lim_{t\to\infty}{\int_{t_0}^te^{\Re(r_1)(t-s)}e^{-\mu s^{\beta}}ds}\big/{e^{-\mu t^{\beta}}}\\
    =\lim_{t\to\infty}{\int_{t_0}^te^{-\Re(r_1)s-\mu s^{\beta}}ds}\big/{e^{-\Re(r_1)t-\mu t^{\beta}}}={-1}\big/{\Re(r_1)}.
    \end{multline*}
	Hence $\int_{t_0}^te^{\Re(r_1)(t-s)}e^{-\mu s^{\beta}}ds=\mathcal{O}(e^{-\mu t^{\beta}})$.
	This together with \eqref{eq:13} leads to the fact that
	\begin{equation}\label{eq:15}
	R_2(t,t_0)=\mathcal{O}(e^{-\mu t^{\beta}}).
	\end{equation}
	
	It can be calculated that
	\begin{equation}\label{eq:17}
	R_3(t,t_0)=\frac{E(\zeta^*)(e^{-r_1t}-e^{-r_1t_0})}{-r_1e^{-r_1t}}=\mathcal{O}(e^{r_1t}).
	\end{equation}
	
	Therefore by \eqref{eq:16}, \eqref{eq:12}, \eqref{eq:15} and \eqref{eq:17}, it is proved that
	$E(x_1(t)-{\zeta^*}/{(-r_1)})=\mathcal{O}(e^{-\mu t^{\beta}})$. By the same procedure, it can be proved that
	\begin{equation}\label{eq:18}
	E\left(x_i(t)-\zeta^*/\prod\nolimits_{j=1}^i(-r_j)\right)=\mathcal{O}(e^{-\mu t^{\beta}}),\; i=1,\cdots,n.
	\end{equation}

	{\textbf{Part II:} } It is easy to see that  $\prod_{r=1}^n(-r_i)=b_0$.
	Therefore, it is obtained from \eqref{eq:18} that
	\begin{equation}\label{eq:38}
	E(\xi(t)-\zeta^*/b_0)=\mathcal{O}(e^{-\mu t^{\beta}}).
	\end{equation}
	
	By \eqref{eq:18}, $E\big(x_{n-1}(t)+r_n\zeta^*/b_0\big)=E\big(\dot \xi(t)-r_n\xi(t)+r_n\zeta^*/b_0\big)=\mathcal{O}(e^{-\mu t^{\beta}})$, which together with \eqref{eq:38} leads to the fact that $E(\dot \xi(t))=r_nE(\xi(t)-\zeta^*/b_0)+\mathcal{O}(e^{-\mu t^{\beta}})=\mathcal{O}(e^{-\mu t^{\beta}})$.
	
	Assume that there exists a positive integer $k<n$ such that for $\forall i\in\{1,\cdots,k\}$, $E(\xi^{(i)}(t))=\mathcal{O}(e^{-\mu t^{\beta}})$.
	It is obtained from \eqref{eq:18} that $E(x_{n-k-1}(t)-\prod_{i=n-k}^n (-r_i)\zeta^*/b_0)=E(\prod_{i={n-k}}^n(\D-r_i)\xi(t)-\prod_{i=n-k}^n (-r_i)\zeta^*/b_0)=\mathcal{O}(e^{-\mu t^{\beta}})$,
	which follows that $E(\xi^{(k+1)}(t))=\mathcal{O}(e^{-\mu t^{\beta}})+\prod_{i=n-k}^n (-r_i)E(\xi(t)-\zeta^*/b_0)=\mathcal{O}(e^{-\mu t^{\beta}})$.
	By the mathematical induction, it is proved that $E(\xi^{(i)}(t))=\mathcal{O}(e^{-\mu t^{\beta}})$, $i=1,\cdots,n$.
	
	\textbf{(III)}
	By \eqref{eq:16}, it is obtained that
	\begin{equation}\label{eq:22}
		\left|x_1+\frac{\zeta^*}{r_1}\right|^2\le 3(R_4(t,t_0)+R_5(t,t_0)+R_6(t,t_0)),
	\end{equation}
	where $R_4(t,t_0)=|e^{r_1(t-t_0)}x_1(t_0)|^2$, $R_5(t,t_0)=\left|\int_{t_0}^te^{r_1(t-s)}(\zeta(s)-\zeta^*) ds\right|^2$ and $R_6(t,t_0)=\big|\int_{t_0}^te^{r_1(t-s)}\zeta^* ds+\frac{\zeta^*}{r_1}\big|^2$.
%
%
	
	It is easy to see that
	\begin{equation}\label{eq:23}
	E(R_4(t,t_0))=\mathcal{O}(e^{2\Re(r_1)t})=o(t^{-\beta}).
	\end{equation}
	According to the properties of mean square integral, it can be obtained that
	\begin{equation}\label{eq:19}
	E(R_5(t,t_0))\le\left(\int_{t_0}^te^{\Re(r_1)(t-s)}E^{\frac{1}{2}}|\zeta(s)-\zeta^*|^2 ds\right)^2.
	\end{equation}
	Since $E|\zeta(t)-\zeta^*|^2=\mathcal{O}(t^{-\beta})$, there must exist two positive constants $T$ and $M$ such that for $\forall t>T$, $E|\zeta(t)-\zeta^*|^2<Mt^{-\beta}$.

	Therefore,
	\begin{multline}
		\int_{t_0}^te^{\Re(r_1)(t-s)}E^{\frac{1}{2}}(\zeta(s)-\zeta^*)^2 ds\le e^{\Re(r_1)t}
\int_{t_0}^T\!\!\!\!e^{-\Re(r_1)s}\\\times E^{\frac{1}{2}}(\zeta(s)-\zeta^*)^2ds+M^{\frac{1}{2}}\int_{T}^t\!\!\!\!e^{\Re(r_1)(t-s)}s^{\frac{-\beta}{2}}ds.\label{eq:20}
	\end{multline}
	It is easy to see that
	\begin{equation}\label{eq:21}
	\!e^{\Re(r_1)t}\int_{t_0}^T\!\!\!\!\!\!e^{-\Re(r_1)s}E^{\frac{1}{2}}(\zeta(s)-\zeta^*)^2ds=\mathcal{O}(e^{\Re(r_1)t})=o(t^{-\beta}).
	\end{equation}
	By L'H\^ospital's rule, it is obtained that
	\begin{IEEEeqnarray}{ll}
		&\lim_{t\to\infty}\frac{\int_{T}^te^{\Re(r_1)(t-s)}s^{\frac{-\beta}{2}}ds}{t^{-\frac{\beta}{2}}}=\lim_{t\to\infty}\frac{\int_{T}^te^{-\Re(r_1)s}s^{\frac{-\beta}{2}}ds}{e^{-\Re(r_1)t}t^{-\frac{\beta}{2}}}\nonumber\\
		&=\lim_{t\to\infty}\frac{e^{-\Re(r_1)t}t^{\frac{-\beta}{2}}}{-\Re(r_1)e^{-\Re(r_1)t}t^{-\frac{\beta}{2}}-\frac{\beta}{2}e^{-\Re(r_1)t}t^{-\frac{\beta}{2}-1}}=-\frac{1}{\Re(r_1)}.\nonumber
	\end{IEEEeqnarray}
	Hence $\int_{T}^te^{r_1(t-s)}s^{\frac{-\beta}{2}}ds=\mathcal{O}(t^{-\frac{\beta}{2}})$, which together with \eqref{eq:19}, \eqref{eq:20} and \eqref{eq:21} leads to
	\begin{equation}\label{eq:24}
	E(R_5(t,t_0))=\mathcal{O}(t^{-\beta}).
	\end{equation}
	
	It can be calculated that
	\begin{multline}\label{eq:26}
		E(R_6(t,t_0)) = E\left|\frac{e^{r_1t}}{-r_1}(e^{-r_1t}-e^{r_1t_0})\zeta^*+\frac{\zeta^*}{r_1}\right|^2\\
		=E\left|\frac{e^{r_1(t-t_0)}}{r_1}\zeta^*\right|^2=\mathcal{O}(e^{2\Re(r_1)t})=o(t^{-\beta}).
	\end{multline}
	By \eqref{eq:22}, \eqref{eq:23}, \eqref{eq:24} and \eqref{eq:26}, it is obtained that $E\left|x_1(t)+ \zeta^*/r_1\right|^2=\mathcal{O}(t^{-\beta})$.
	
	By the same procedure, it can be proved that
	\begin{equation}\label{eq:25}
	E\left|x_i(t)-\frac{\zeta^*}{\prod_{j=1}^i(-r_j)}\right|^2=\mathcal{O}(t^{-\beta}),\ i=1,\cdots,n,
	\end{equation}
	which indicates that $E|\xi(t)-\zeta^*/b_0|^2=\mathcal{O}(t^{-\beta})$.
	
	Since $x_{n-1}(t)=\dot \xi(t)-r_{n}\xi(t)$, it is obtained that
	\begin{IEEEeqnarray}{ll}
		&E|\dot \xi(t)|^2= E\left|x_{n-1}(t)-\frac{-r_n\zeta^*}{b_0}+\frac{-r_n\zeta^*}{b_0}+r_n\xi(t)\right|^2\nonumber\\
		&\le 2E\left|x_{n-1}(t)-\frac{-r_n\zeta^*}{b_0}\right|^2+2|r_n|^2E\left|\frac{\zeta^*}{b_0}-\xi(t)\right|^2=\mathcal{O}(t^{-\beta}).\nonumber
	\end{IEEEeqnarray}
	
	Assume that there exists a positive integer $k<n$ such that for $\forall i\in\{1,2,\cdots,k\}$, $E|\xi^{(i)}(t)|^2=\mathcal{O}(t^{-\beta})$.
	It is obtained from \eqref{eq:25} that $E\left|x_{n-k-1}(t)-\prod_{i=n-k}^n (-r_i)\zeta^*/b_0\right|^2=E\left|\prod_{i={n-k}}^n(\D-r_i)\xi(t)-\prod_{i=n-k}^n (-r_i)\zeta^*/b_0\right|^2=\mathcal{O}(t^{-\beta})$.
	There must exist $k$ constants $\rho_1,\cdots,\rho_k$ such that
	\begin{multline*}
	x_{n-k-1}(t)=\prod\nolimits_{i={n-k}}^n(\D-r_i)\xi(t)\triangleq\xi^{(k+1)}(t)\\
	+\sum_{j=1}^k\rho_j\xi^{(j)}(t)+\prod\nolimits_{i={n-k}}^n(-r_i)\xi(t).
	\end{multline*}
	Therefore, $|\xi^{(k+1)}(t)|^2=\big|x_{n-k-1}(t)-\sum_{j=1}^k\rho_j\xi^{(j)}(t)-\prod_{i={n-k}}^n(-r_i)\xi(t)\big|^2=\big|x_{n-k-1}(t)-\prod_{i=n-k}^n (-r_i)\zeta^*/b_0 -\sum_{j=1}^k\rho_j\xi^{(j)}(t)+\prod_{i=n-k}^n (-r_i)\zeta^*/b_0-\prod_{i={n-k}}^n(-r_i)\xi(t)\big|^2$, which follows that
\begin{equation*}
  E|\xi^{(k+1)}(t)|^2\le (k+2) \Bigg(E\left|x_{n-k-1}(t)-\frac{\prod_{i=n-k}^n (-r_i)\zeta^*}{b_0}\right|^2
\end{equation*}
\begin{multline*}
  +\sum_{j=1}^k|\rho_j|^2E|\xi^{(j)}(t)|^2+\left|\prod\nolimits_{i={n-k}}^n(-r_i)\right|^2E\left|\zeta^*/b_0-\xi(t)\right|^2	\Bigg)\\
  =\mathcal{O}(t^{-\beta}).
\end{multline*}
	By the mathematical induction, it is proved that $E|\xi^{(i)}(t)|=\mathcal{O}(t^{-\beta})$, $i=1,\cdots,n$.
\end{proof}
\end{appendix}

\end{document}